\def\endthebibliography{%
  \def\@noitemerr{\@latex@warning{Empty `thebibliography' environment}}%
  \endlist
}
\let\NAT@parse\undefined
\Crefname{figure}{Fig.}{Figs.}
\theoremstyle{plain}
\newtheorem{problem}{\textbf{Problem}}
\newtheorem{definition}{\bf{Definition}}
\newtheorem{lemma}{\bf{Lemma}}
\newtheorem{remark}{\bf{Remark}}
\renewcommand{\vec}[1]{\boldsymbol{\mathbf{#1}}}
\newcommand{\mat}[1]{\boldsymbol{\mathbf{#1}}}
\title{\LARGE \bf
Corridor MPC for Multi-Agent Inspection of Orbiting Structures \vspace{-4mm}
}
\author{Gregorio Marchesini\textsuperscript{\dag}, Pedro Roque\textsuperscript{\dag}, and Dimos V. Dimarogonas\textsuperscript{\dag}
\vspace{-16mm}
\thanks{\textsuperscript{\ddag}G.Marchesini, P. Roque, and D. V. Dimarogonas are with the Division of Decision and Control Systems, KTH Royal Institute of Technology, Stockholm, Sweden. 
        E-Mail: {\tt\small $\{$gremar, padr, dimos$\}$@kth.se}
        }
\thanks{This work was supported by the H2020 ERC Grant LEAFHOUND, the Horizon Europe EIC project SymAware (101070802), the Swedish Research Council (VR), the Knut och Alice Wallenberg Foundation (KAW), and the Wallenberg AI, Autonomous Systems and
Software Program (WASP) DISCOWER funded by the Knut and Alice Wallenberg Foundation.}
}
\def\footnoterule{\relax%
  \kern-5pt
  \hbox to \columnwidth{\vrule width 0.5\columnwidth height 0.4pt\hfill}
  \kern4.6pt}
\newcommand{\csf}{\mathcal{C}_{S}}
\newcommand{\dr}{\delta \bm{r}}
\newcommand{\dvel}{\delta \bm{v}}
\newcommand{\rdir}{\hat{\bm{r}}}
\newcommand{\sdir}{\hat{\bm{s}}}
\newcommand{\wdir}{\hat{\bm{w}}}
\begin{document}

\maketitle
\thispagestyle{empty}
\pagestyle{empty}

\begin{abstract}
In this work, we propose an extension of the previously introduced Corridor Model Predictive Control scheme for high-order and distributed systems, with an application for on-orbit inspection. To this end, we leverage high order control barrier function (HOCBF) constraints as a suitable control approach to maintain each agent in the formation within a \textit{safe} corridor from its reference trajectory. The recursive feasibility of the designed MPC scheme is tested numerically, while suitable modifications of the classical HOCBF constraint definition are introduced such that safety is guaranteed both in sampled and continuous time. The designed controller is validated through computer simulation in a realistic inspection scenario of the International Space Station.
\end{abstract}

%
\section{INTRODUCTION}
\label{introduction}
The application of multi-agent systems (MAS) design to solve complex robotics tasks has received increasing attention in the past few decades 
 \cite{dorri_multi-agent_2018,mesbahi_graph_2010}. Examples of successful MAS control paradigms for terrestrial and aerial applications are extensive in the literature due to the broad application range. The advantages of MAS design include redundancy and robustness to single agent failure, reduced complexity in single agent hardware and the possibility to accomplish complex interactions among heterogeneous agents. These same advantages are of critical importance for the next generation of planetary exploration, on-orbit servicing and construction mission concepts, to mention a few \cite{ekblaw_self-assembling_2019, khoshnevis_isru-based_2017, flores-abad_review_2014}.\par
In this work, we propose a solution to the problem of multi-agent inspection of on-orbit space vehicles using unmanned autonomous spacecrafts \cite{nakka_information-based_2021}. The ability to autonomously inspect space vehicles has the potential to lower the cost of replacing space assets and hence help in reducing the population of space debris orbiting the Earth. This fact drives our work towards a fully safe and autonomous inspection of such space assets. We structure the inspection mission as follows: we assume a formation of CubeSats, called the \textit{inspectors} \cite{nakka_information-based_2021}, is deployed into a set of Passive Relative Orbits (PRO) \cite{alfriend_spacecraft_2009} around a space vehicle orbiting a planetary body in a nearly circular orbit. Each inspector is controlled through a sampled-data model predictive controller (MPC), which is applied to track the assigned PRO under the influence of orbital perturbations. Appropriate High Order Control Barrier Function 
(HOCBF) constraints are introduced within the MPC scheme to constrain the system inside worst-case velocity and position tracking error bounds for each inspector in the formation.\par
The application of sampled-data CBF inside a Finite Horizon Optimal Control scheme (FHOC) such as MPC has already been explored in \cite{zeng_safety-critical_2020}. Still, safety in between discrete time steps is not analyzed. This problem is first addressed in \cite{cortez_control_2019} and \cite{breeden_control_2021}, where suitable corrective terms are added to the continuous time CBF constraint formulation to ensure safety between time steps. However, only first relative degree systems under zero disturbances are analyzed in \cite{breeden_control_2021}, while only time-invariant dynamics are analyzed in \cite{cortez_control_2019}. In \cite{roque_corridor_2022}, the results from \cite{cortez_control_2019,breeden_control_2021} are unified under a unique framework and expanded to first-order relative degree systems with time-varying dynamics and subject to state disturbances.\par
Based on the Corridor MPC (CMPC) scheme developed by \cite{roque_corridor_2022}, the contributions of this work are as follows: i) expand the definition of sampled-data CBF in \cite{roque_corridor_2022} to the case of higher relative degree systems, ii) apply the expanded CMPC control scheme to a realistic multi-agent inspection mission of a space vehicle. The recursive feasibility of the derived CMPC control scheme is then shown numerically following an approach similar to \cite{tan_compatibility_2022}.
We remark that only the planning and control parts of the inspection mission are analyzed here, while problems inherent to the visual inspection of the space vehicle are topics of future work.\par
The manuscript is divided as follows: Section \ref{background} reviews the fundamentals of relative spacecraft dynamics and HOCBF for safety-critical systems. In Section \ref{problem-statement}, we formally present the inspection problem, and Section \ref{discrete time barrier function section} proposes the new definition of sample data HOCBF as an expansion to the work in \cite{roque_corridor_2022,cortez_control_2019,breeden_control_2021}. Section \ref{control} presents the Corridor MPC for high-order systems. Lastly, Sections \ref{results}-\ref{conclusion} show a numerical simulation proving the applicability of the proposed solution in a realistic inspection of the International Space Station (ISS), followed by the conclusions.
\par
\textit{Notation:} Small, bold letters represent vectors. 
Matrices are denoted by bold, capital letters. 
Regular letters denote scalars.
Calligraphic letters denote reference frames, and the basis vectors of a frame $\mathcal{A}$ are denoted $\{\vec{a}_x,\vec{a}_y,\vec{a}_z\}$.
The weighted vector norm $\sqrt{\vec x^T \mat A \vec x}$ is denoted $\norm{\vec x}_{\mat A}$. 
The notation $\| \cdot \|$ represents the standard Euclidean 2-norm, while the hat symbol ($\hat{\cdot}$) on top of a vector quantity denotes a unitary (unit-norm) vector.
A continuous function, $\alpha(\cdot):\mathbb{R}\rightarrow\mathbb{R}$ is an extended class $\mathcal{K}$-function if it
is strictly increasing and  $\alpha(0) = 0$, while $\alpha(\cdot)$ is of class $C^{l}$ if it is $l$ times continuously differentiable in its argument.
Given a sampling time $\Delta t \geq 0$, we define a discrete time instant as $k\Delta t\triangleq t_0+k\Delta t$ with $k\in\mathbb{N}_{0}=\mathbb{N}\cup\{0\}$ and initial reference time $t_0\in\mathbb{R}_{\geq0}$.
We will use the notation $\bm{a}(i|k\Delta t)$ to indicate a property that is predicted $i$-steps ahead relative to the current discrete time instant $k\Delta t$.  
We denote vectorial/scalar properties of the space vehicle and the inspectors with the subscripts $sv$ and $ins$ respectively. 
\section{BACKGROUND}
\label{background}
\subsection{Nonlinear Relative Dynamics}
\begin{figure}[!h]
    \vspace{-12mm}
    \centering
    \includegraphics[trim={2.6cm 3.4cm 2.3cm 1.8cm},clip,scale=0.29]{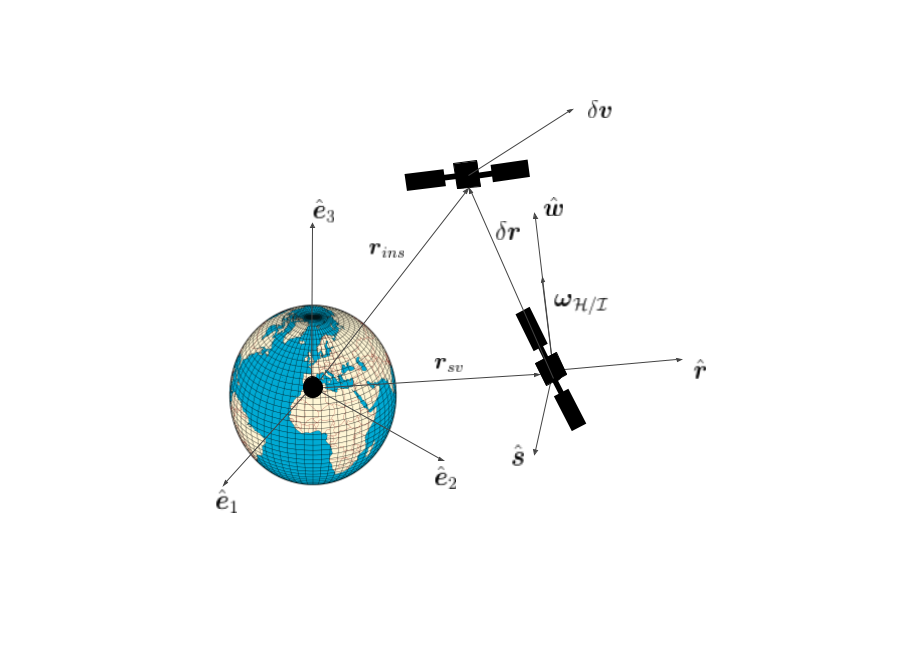}
    \vspace{-6mm}
    \caption{Relative state of the inspector spacecraft ($\delta \bm{r}$) with respect to the a general SV ($\bm{r}_{sv}$). The Hill's frame is defined by the base $\{\hat{\bm{r}},\hat{\bm{s}},\hat{\bm{w}}\}$ while the inertial frame $\mathcal{J}$ is defined by $\{\hat{\bm{e}}_1,\hat{\bm{e}}_2,\hat{\bm{e}}_3\}$.}
    \label{fig:hill frame}
    \vspace{-2mm}
\end{figure}
Consider $\mathcal{J}$ to be an inertial frame fixed at the Earth's centre of mass with base \{$\hat{\bm{e}}_1$,$\hat{\bm{e}}_2$,$\hat{\bm{e}}_3$\}, such that $\hat{\bm{e}}_3$ is aligned with the rotational axis of the Earth. Furthermore, consider the Local Vertical Local Horizontal frame $\mathcal{H}$ over the space vehicle's inertial state with base $\{\hat{\bm{r}},\hat{\bm{s}},\hat{\bm{w}}\}$ with $
\rdir = \frac{\bm{r}_{sv}}{\|\bm{r}_{sv}\|}$, $
\wdir = \frac{\bm{r}_{sv} \times \dot{\bm{r}}_{sv}}{\|\bm{r}_{sv} \times \dot{\bm{r}}_{sv}\|}$, $
\sdir = \wdir \times \rdir$, where $\bm{r}_{sv}$ and $\dot{\bm{r}}_{sv}$ are the inertial position and velocity of space vehicle with respect to $\mathcal{J}$. We define the relative position and velocity of the inspector in the $\mathcal{H}$ frame as $\delta \bm{r} =\bm{r}_{ins}-\bm{r}_{sv}$ and $\delta \bm{v}= \delta \dot{\bm{r}}-\bm{\omega}_{\mathcal{H}/\mathcal{J}}\times \delta \bm{r}$ respectively, with $\bm{\omega}_{\mathcal{H}/\mathcal{J}}$ being the angular velocity of $\mathcal{H}$ w.r.t to $\mathcal{J}$ and $\delta \dot{\bm{r}}$ being the time derivative of $\delta \bm{r}$ w.r.t $\mathcal{J}$. 
Given the inspector relative state $\bm{x}=[\delta \bm{r},\delta \bm{v}]^T \in \tilde{\mathbb{X}}\subset \mathbb{R}^{6}$ and time $t\in\mathbb{I}\subset \mathbb{R}_{\geq0}$, the nominal dynamics of an inspector relative to the space vehicle written in state space form is given as
\begin{equation}\label{eq:unperturbed nonlinear system}
\dot{\bm{x}}= \eta(\bm{x},t) = \bm{f}(\bm{x},t) + \bm{g}(\bm{x},t)\bm{u}\\
\end{equation}
with $\bm{g}(\bm{x},t) =[\bm{O}_{3}, \bm{I}_{3}]^T$, $\bm{f}(\bm{x},t) =[
\bm{f}_r(\bm{x},t), \bm{f}_v(\bm{x},t)]^T$ and $\bm{u}\in\mathbb{U}\subset\mathbb{R}^3$ the acceleration from the inspector's propulsion system. We assume $\mathbb{U}$ and $\tilde{\mathbb{X}}$ are compact sets containing the origin, and $\mathbb{I}$ is a compact time interval.  The full dynamics of \eqref{eq:unperturbed nonlinear system} can be found in \cite[Eqs. 4.14-4.16]{alfriend_spacecraft_2009} and are omitted for brevity. 
Nonetheless, $\bm{f}$ and $\bm{g}$ are of class $C^{\infty}$ in $t$ and $\bm{x}$ for any compact set $\tilde{\mathbb{X}}$ such that $\delta \bm{r}\neq \bm{r}_{sv} ;\forall \bm{x}\in\tilde{\mathbb{X}}$. Under the effect of orbital perturbations, the nominal dynamics \eqref{eq:unperturbed nonlinear system} become
\begin{equation}\label{eq:perturbed nonlinear system}
\dot{\tilde{\bm{x}}} = \tilde{\eta}(\tilde{\bm{x}},t,\bm{d}) = \bm{f}(\tilde{\bm{x}},t) + \bm{g}(\tilde{\bm{x}},t)(\bm{u}+\bm{d})
\end{equation}
where  $\eta(\bm{x},t) \triangleq\tilde{\eta}(\tilde{\bm{x}},t,\bm{0})$, $\bm{d}\in \mathbb{W}\subset\mathbb{R}^3$ is the orbital perturbation vector gathering $J_2$ and aerodynamic drag perturbations \cite{morgan_swarm-keeping_2012}, and $\tilde{\bm{x}}=[\delta \tilde{\bm{r}},\delta \tilde{\bm{v}}]\in\tilde{\mathbb{X}}\subset\mathbb{R}^6$ is the perturbed state. The tilde notation distinguishes the perturbed state dynamics from the nominal ones.
Although analytical models for $\bm{d}$ exist \cite{morgan_swarm-keeping_2012}, they depend on parameters that are difficult to identify online. Therefore, we consider $\bm{d}$ an unknown bounded disturbance encompassing noise. 
We further assume $\mathbb{W}$ a compact and convex set containing the origin such that 
\begin{equation}
\begin{gathered}
\label{eq:control and disturbances bounds}
\tilde{\mathbb{X}} = \{  \tilde{\bm{x}}\in \mathbb{R}^6: \| \tilde{\bm{x}}\| \leq \epsilon_{\tilde{x}}\},\mathbb{I}=[0,t_{max}],\\
\mathbb{U} = \{ \bm{u}\in \mathbb{R}^3: \|\bm{u}\| \leq \epsilon_u\},
\mathbb{W} = \{ \bm{d}\in \mathbb{R}^3: \|\bm{d}\| \leq \epsilon_d\}
\end{gathered}
\end{equation}
with $\epsilon_u,\epsilon_d,\epsilon_{\tilde{x}},t_{max}\in\mathbb{R}^{+}$.
We denote as $\eta_d(\bm{x},k\Delta t)$ the discrete-time version of the nominal dynamics $\eta(\bm{x},t)$, obtained through Runge-Kutta integration so that $
\bm{x}((k+1)\Delta t) = \eta_{d}(\bm{x}(k\Delta t),k\Delta t)$. As we restrict our derivations to SVs in nearly circular orbits, we use passive relative orbits \cite[Ch. 5]{alfriend_spacecraft_2009} as suitable reference trajectories to be tracked by each inspector. We refer to the reference PRO trajectory state at time $t$ as $\bm{x}_r(t)=[\delta \bm{r}_r(t),\delta \bm{v}_r(t)]$ according to Table \ref{tab:PRO table}.
\begin{table}[!htpb]
    \vspace{-2mm}
    \centering
    \begin{tabular}{clll}
    \toprule
     direction & $\delta \bm{r}_r(t)$&$\delta \bm{v}_r
     (t)$&$\delta \bm{a}_r(t)$\\
     \midrule
         $\hat{\bm{r}}$&  $\rho_{r} \sin {\omega_w^{r}}$ &$\omega_w\rho_{r}\cos {\omega_w^{r}}$ & $- \omega^2_w \rho_r\sin {\omega_w^r}$\\
         $\hat{\bm{s}}$&$\rho_{s}+2 \rho_{r} \cos {\omega_w^{r}}$& $-2\omega_w \rho_{r}\sin{\omega_w^{r}}$& $- 2\omega^2_w \rho_r\cos{\omega_w^r}$\\
         $\hat{\bm{w}}$&$\rho_{w} \sin{\omega_w^{w}}$&$\omega_w\rho_{w} \cos{\omega_w^{w}}$ & $-\omega^2_w \rho_w\sin{\omega_w^w}$\\
    \end{tabular}
    \caption{Component wise definition of a PRO orbit in \textit{amplitude-phase} parameters in the Local Vertical Local Horizontal frame $\mathcal{H}$. The shorthand notation $\omega_w^{r} \triangleq \omega_w t+\alpha_{r}$, $\omega_w^{w} \triangleq \omega_w t+\alpha_{w}$ is applied.}
    \label{tab:PRO table}
    \vspace{-3mm}
\end{table}

\noindent The parameters $\rho_r$, $\rho_s$, $\alpha_r$, and $\alpha_w$ are positive scalars and design parameters. The term $\omega_w$ is the mean motion of the space vehicle orbit, and it is given as $\omega_w=\sqrt{\mu/p^3}$ [\unit{s^{-1}}] with $p$ being the semi-major axis of the space vehicle orbit and $\mu$ being the standard gravitational parameters of the Earth. Such a reference trajectory can be proved to be an unforced solution of the Cloessy-Wiltshire (CW) model \cite[Ch. 5]{alfriend_spacecraft_2009}, which is obtained by linearising the nominal dynamics \eqref{eq:unperturbed nonlinear system} around the space vehicle inertial position and assuming the space vehicle's orbit to be perfectly circular (zero eccentricity) and under zero orbital perturbations. Assuming an inspector is correctly deployed into a PRO, due to the effect of orbital perturbations (considered absent in the CW model) and the nonlinearity of the real dynamics, the inspector would eventually drift from its assigned PRO if not properly stabilised. We hence define the position and velocity tracking errors under nominal dynamics as $\bm{e}_{\dr}(t) \triangleq \delta \bm{r}(t) - \delta \bm{r}_r(t)$, $\bm{e}_{\dvel}(t) \triangleq \delta \bm{v}(t) - \delta \bm{v}_r(t)$ and under perturbed dynamics as $\tilde{\bm{e}}_{\dr}(t) \triangleq\delta \tilde{\bm{r}}(t) - \delta \bm{r}_r(t) $, $\tilde{\bm{e}}_{\dvel}(t) \triangleq \delta \tilde{\bm{v}}(t) - \delta \bm{v}_r(t)$. Additionally, $\bm{e}(t)=[\bm{e}_{\dr},\bm{e}_{\dvel}]^T$ and $\tilde{\bm{e}}(t)=[\tilde{\bm{e}}_{\dr},\tilde{\bm{e}}_{\dvel}]^T$.
Without loss of generality, we will consider $\delta \bm{a}_{r}(t) = \bm{0}$.
\subsection{High Order Control Barrier Functions}\label{cbf section}
Control barrier functions (CBFs) \cite{ames_control_2019} and their high order version (HOCBFs) \cite{xiao_control_2019} are a commonly applied analytical tool to define control invariant sets where the system state can evolve \textit{safely} w.r.t a given notion of safety (\textit{i.e.} avoid an area where an obstacle is located or avoid a region of instability for the system). In this section, we review the key application of HOCBFs in the design of safe controllers.
Consider a continuously differentiable scalar function $h(\tilde{\bm{x}},t) : \mathbb{D} \times \mathbb{I} \rightarrow \mathbb{R}$, with $\mathbb{D} \subseteq \tilde{\mathbb{X}}$ being compact, and consider $\csf(t)$ and  $\partial \csf(t)$ to be the super level set of $h$ and its boundary, defined as $\csf(t) := \{\tilde{\bm{x}}\in \mathbb{D}: h(\tilde{\bm{x}},t) \geq 0\}$, $\partial \csf(t) := \{\tilde{\bm{x}}\in \mathbb{D}: h(\tilde{\bm{x}},t) = 0\}$. We recall the definition of relative degree as
\begin{definition}
(\cite[Def. 5]{xiao_control_2019})
The relative degree of a continuously differentiable function $h: \mathbb{D}\times \mathbb{I} \rightarrow \mathbb{R}$ with
respect to system equation \eqref{eq:perturbed nonlinear system} is the number of times it is  needed to
differentiate it along the dynamics of equation \eqref{eq:perturbed nonlinear system} until control $\bm{u}$ explicitly shows.
\end{definition}
When $h(\tilde{\bm{x}},t)$ is of relative degree 1, it is possible to directly define a valid input $\bm{u}$ such that $\csf(t)$ is control forward invariant as in \cite{ames_control_2019}. 
When the relative degree of $h$ is $r>1$, we can guarantee $\csf(t)$ is a control invariant set by rendering a strict subset of $\csf(t)$ forward invariant. Consider the cascade of functions $H_i : \mathbb{D} \times \mathbb{I} \rightarrow \mathbb{R}, \; \forall i=0,\ldots r$ as
\begin{equation}
\label{eq:barriers cascade}
\begin{aligned}
H_0(\tilde{\bm{x}},t) &= h(\tilde{\bm{x}},t);
H_i(\tilde{\bm{x}},t) = \dot{H}_{i\text{-}1}(\tilde{\bm{x}},t) + \alpha_{i\text{-}1}(H_{i\text{-}1}(\tilde{\bm{x}},t)),
\end{aligned}
\end{equation}
where $\alpha_j(t):\mathbb{R}_{\geq0}\rightarrow \mathbb{R}_{\geq0} \; \forall \, j = 0,\ldots r-1$ are class $\mathcal{K}$ and $C^{r-j}$ functions.
A safe set for every $H_i$ is defined as
\begin{equation}
\label{eq:safe sets cascade}
\begin{aligned}
\mathcal{C}_{S_i}(t) &= \{\tilde{\bm{x}} \in \mathbb{D} : H_i \geq 0 \} \;\forall i=0,\ldots r.\\
\end{aligned}
\end{equation}
\begin{definition}(modified from \cite[Def. 7]{xiao_control_2019})\label{high order barrier definition}
Let $H_{i}$ and $\mathcal{C}_{S_i}$ be defined as in equations \eqref{eq:barriers cascade} and \eqref{eq:safe sets cascade}, respectively, $\forall i=0,\ldots r$. The function $h(\tilde{\bm{x}},t) : \mathbb{D} \times \mathbb{I} \rightarrow \mathbb{R}$ is a High Order Barrier Function (HOCBF) for \eqref{eq:perturbed nonlinear system} if it is $r$ times differentiable in $\tilde{\bm{x}}$ and $t$, and there exists $\alpha_j(t):\mathbb{R}_{\geq0}\rightarrow \mathbb{R}_{\geq0} \; \forall \, j = 0,\ldots r-1$ class $\mathcal{K}$-functions such that $\alpha_j$ is of class $C^{r-j}$ and 
\begin{equation}
\label{eq:HOCBF condition}
\begin{aligned}
    & \underset{\bm{u}\in \mathbb{U}}{sup}\biggl[\pdv{H_{r\text{-}1}(\tilde{\bm{x}},t)}{t}+\mathcal{L}_fH_{r\text{-}1}(\tilde{\bm{x}},t) + \mathcal{L}_gH_{r\text{-}1}(\tilde{\bm{x}},t)\bm{u}+  \\
    & \quad \mathcal{L}_gH_{r\text{-}1}(\tilde{\bm{x}},t)\bm{d} + \alpha_{r\text{-}1}(H_{r\text{-}1}(\tilde{\bm{x}},t))\biggr]\geq0, \\ 
    & \forall \tilde{\bm{x}} \in \cap_{r\text{-}1}^{0}\mathcal{C}_{S_i}(t). 
\end{aligned}
\end{equation}
\end{definition}
If condition \eqref{eq:HOCBF condition} is satisfied, then there exists a control input $\bm{u}\in\mathbb{U}$ that renders $\cap^0_{r\text{-}1}\mathcal{C}_{S_i}(t)$ forward invariant \cite[Thm. 5]{xiao_control_2019}.
For easiness of notation we define the function $\tilde{\zeta}(\tilde{\bm{x}},\bm{u},\delta\bm{d},t):\mathbb{D}\times\mathbb{U}\times\mathbb{W}\times\mathbb{I} \rightarrow \mathbb{R}$ as
\begin{equation}\label{eq:HOCBF constraint defintion}
\begin{aligned}
\tilde{\zeta}&(\tilde{\bm{x}},\bm{u},\delta\bm{d},t):=\pdv{H_{r\text{-}1}(\tilde{\bm{x}},t)}{t}+\mathcal{L}_fH_{r\text{-}1}(\bm{x},t)+ \\& \qquad \mathcal{L}_gH_{r\text{-}1}(\tilde{\bm{x}},t)(\bm{u}+\bm{d}) + \alpha_{r\text{-}1}(H_{r\text{-}1}(\tilde{\bm{x}},t))
\end{aligned}
\end{equation}
and forward invariance of the safe set $\cap^{0}_{r\text{-}1}\mathcal{C}_{S_i}(t)$ is enforced by ensuring that the condition $\tilde{\zeta}(\tilde{\bm{x}},\bm{u},\delta\bm{d},t) \geq 0$ is met everywhere inside $\cap^0_{r\text{-}1}\mathcal{C}_{S_i}(t)$. Similarly to the function $\eta(\bm{x},\bm{u},t)$, we define the function $\zeta(\bm{x},\bm{u},t) \triangleq \tilde{\zeta}(\tilde{\bm{x}},\bm{u},\bm{0},t)$. 
\section{PROBLEM STATEMENT}
\label{problem-statement}
We want to maintain a set of inspector spacecrafts with dynamics as in \eqref{eq:perturbed nonlinear system}, $\epsilon$-close to a set of distinct PROs (Tab. \ref{tab:PRO table}) relative to the ISS.
The PRO set is defined to avoid collision between the inspectors as long as each inspector is within a safe corridor from its reference. Hence, no active collision avoidance is needed.
Therefore, we propose a CMPC \cite{roque_corridor_2022} control scheme subject to two HOCBF constraints for each inspector: one to constraint the maximum position tracking error and the other to constraint the maximum velocity tracking error. 
We consider the following problem.
\begin{problem}\label{main problem}
 Consider a set of $n$ inspector spacecraft $s_i\;,  i=1,\ldots n$ in relative orbit around a space vehicle with dynamics \eqref{eq:perturbed nonlinear system}, and assume the space vehicle to be in a nearly circular orbit. Consider as well $n$ distinct reference PROs $\bm{x}_{i,r}(t)=[\dr_{i,r},\dvel_{i,r}]^T$ with relative velocity $\delta \bm{v}_{r}(t)$ and position $\delta \bm{r}_{r}(t)$ according to Table \ref{tab:PRO table} such that each inspector $s_i$ is assigned to a specific reference PRO $\bm{x}_{r,i}$. We consider how to synthesize a ZOH feedback control input $K_i(\tilde{\bm{x}},t) \in \mathbb{U}$ for each inspector under dynamics \eqref{eq:perturbed nonlinear system} such that  
 $
\|\delta \tilde{\bm{r}}_{i}(t) - \delta \bm{r}_{i,r}(t)\|\leq \epsilon_{\dr} \; \forall t\in \mathbb{I}
$
with $\epsilon_{\dr}\in \mathbb{R}_{>0}$.

\end{problem}

\section{SAMPLED DATA HOCBF}
\label{discrete time barrier function section}
To solve Problem \ref{main problem}, we define the sampled-data HOCBF, which will be applied inside the control scheme presented in Section \ref{control}. 
For a \textit{zero order hold} (ZOH) sampled-data system, the state measurements are only available at discrete time steps $k\Delta t$. Assuming that a control input $\bm{u}(k\Delta t)\in \mathbb{U}$ is applied to \eqref{eq:perturbed nonlinear system} such that $\tilde{\zeta}(\tilde{\bm{x}}(k\Delta t),\bm{u}(k\Delta t),\delta\bm{d}(k\Delta t),k\Delta t)\geq0$ is satisfied, this condition is not sufficient to guarantee safety throughout the sampling interval $[k\Delta t, (k+1)\Delta t]$. In this section, we present our first contribution to solve this problem by expanding the definition of sampled-data CBF in \cite{roque_corridor_2022} to the HOCBF case. 
\begin{lemma}\label{discrete safety roboust HOCBF}
Consider the perturbed control affine system \eqref{eq:perturbed nonlinear system} where the functions $\bm{g} :\mathbb{R}^{n} \times \mathbb{R}_{\geq 0} \rightarrow \mathbb{R}^{n \times m}$ and  $\bm{f} : \mathbb{R}^{n} \times \mathbb{R}_{\geq 0} \rightarrow \mathbb{R}^{n}$ are at least $C^{r+1}$ in  $\tilde{\bm{x}}$ and $t$ on the set $\tilde{\mathbb{X}}\times\mathbb{I}$. Let $\bm{d}(t)\in\mathbb{W}$ be a bounded unknown piece-wise differentiable disturbance defined on the compact set $\mathbb{W}$ such that $\|\bm{d}(t)\| \leq \epsilon_d \; \forall t\in \mathbb{I}$. Let $[k\Delta t,(k+1)\Delta t) \subset \mathbb{I}$ be a sampling interval for some $\Delta t >0$ such that \eqref{eq:perturbed nonlinear system} is subject to a constant bounded feedback control input $\bm{u}(t) = \bm{u}(\tilde{\bm{x}}(k\Delta t),k\Delta t)\in \mathbb{U} \; \forall t \in [k\Delta t,(k+1)\Delta t]$ shorthanded as $\bm{u}(k\Delta t)$. Furthermore consider a HOCBF $h:\mathbb{D} \times \mathbb{I} \rightarrow \mathbb{R}$ of relative degree $r$ as in Def. \ref{high order barrier definition}  that is at least 
$C^{r+1}$ on $\mathbb{D}\times\mathbb{I}$ and where $\alpha_j$ is $C^{r-j} \; \forall j=0,\ldots r-1 $. Let $\cap^0_{r\text{-}1}\mathcal{C}_{S_i}(t)\subset \mathbb{D}$ be the associated safe sets as in \eqref{eq:safe sets cascade}. Given that at time instant $k\Delta t$, $\tilde{\bm{x}}(k\Delta t) \in \cap^0_{r\text{-}1}\mathcal{C}_{S_i}(t)$, $\bm{x}(k\Delta t)\triangleq\tilde{\bm{x}}(k\Delta t)$ and that the constant feedback control input $\bm{u}(k\Delta t) \in \mathbb{U}$ satisfies 
\begin{equation}\label{eq:roboust safety condition HOBF}
\zeta(\bm{x}(k\Delta t),\bm{u}(k\Delta t),k\Delta t) - L_w\Delta t - c_w\epsilon_d \geq 0
\end{equation}
where $L_w$ is defined over the set $\mathcal{Q}\triangleq\mathbb{D}\times\mathbb{U}\times\mathbb{W}\times\mathbb{I}$ as $
L_w =\underset{(\tilde{\bm{x}},\bm{u},\delta\bm{d},t) \in \mathcal{Q}}{max}|\dot{\tilde{\zeta}}(\tilde{\bm{x}},\bm{u},\delta\bm{d},t)|,
$,
and $c_w$ is defined as $
c_w = \underset{(\tilde{\bm{x}},t)\in \mathbb{D}\times\mathbb{I}}{max}||\mathcal{L}_gH_{r\text{-}1}(\tilde{\bm{x}},t)||,
$. Then it holds that
$
\tilde{\bm{x}}(k \Delta t) \in \cap_{r\text{-}1}^{0}\mathcal{C}_{S_i}(k \Delta t) \Rightarrow  \tilde{\bm{x}}(t) \in \cap_{r\text{-}1}^{0}\mathcal{C}_{S_i}(t), \forall t \in [k\Delta t,(k+1)\Delta t]$.
\end{lemma}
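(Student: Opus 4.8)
\emph{Proof sketch.} The plan is to reduce the claim to a purely continuous-time invariance statement and then recover it from the single sampled inequality \eqref{eq:roboust safety condition HOBF}. Concretely, I would first show that the sampled condition forces the \emph{continuous-time} constraint $\tilde{\zeta}(\tilde{\bm{x}}(t),\bm{u}(k\Delta t),\bm{d}(t),t)\ge 0$ to hold for every $t\in[k\Delta t,(k+1)\Delta t]$, and then invoke the HOCBF forward-invariance mechanism of \cite[Thm. 5]{xiao_control_2019} to conclude that the trajectory cannot leave $\cap_{r-1}^{0}\mathcal{C}_{S_i}(t)$. The two quantities $c_w$ and $L_w$ are designed to absorb, respectively, the \emph{disturbance gap} between $\zeta$ and $\tilde{\zeta}$ and the \emph{inter-sample drift} of $\tilde{\zeta}$ away from its value at the sampling instant.

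For the first part I would bound $\tilde{\zeta}$ along the closed-loop trajectory by two successive estimates. Since, by \eqref{eq:HOCBF constraint defintion}, $\tilde{\zeta}(\tilde{\bm{x}},\bm{u},\bm{d},t)=\zeta(\tilde{\bm{x}},\bm{u},t)+\mathcal{L}_gH_{r-1}(\tilde{\bm{x}},t)\bm{d}$, Cauchy--Schwarz together with $\|\mathcal{L}_gH_{r-1}\|\le c_w$ on $\mathbb{D}\times\mathbb{I}$ and $\|\bm{d}\|\le\epsilon_d$ gives, at the sampling instant and using the hypothesis $\tilde{\bm{x}}(k\Delta t)=\bm{x}(k\Delta t)$, the estimate $\tilde{\zeta}(\tilde{\bm{x}}(k\Delta t),\bm{u},\bm{d}(k\Delta t),k\Delta t)\ge \zeta(\bm{x}(k\Delta t),\bm{u},k\Delta t)-c_w\epsilon_d$. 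Next, because $\bm{f},\bm{g},h$ are $C^{r+1}$ and $\bm{d}$ is piecewise differentiable, $t\mapsto\tilde{\zeta}$ is absolutely continuous along the trajectory, so the fundamental theorem of calculus and $|\dot{\tilde{\zeta}}|\le L_w$ on $\mathcal{Q}$ yield $\tilde{\zeta}(\tilde{\bm{x}}(t),\bm{u},\bm{d}(t),t)\ge \tilde{\zeta}(\tilde{\bm{x}}(k\Delta t),\bm{u},\bm{d}(k\Delta t),k\Delta t)-L_w(t-k\Delta t)$ for $t\le(k+1)\Delta t$. Chaining the two estimates and using $t-k\Delta t\le\Delta t$ produces exactly $\tilde{\zeta}(\tilde{\bm{x}}(t),\bm{u},\bm{d}(t),t)\ge \zeta(\bm{x}(k\Delta t),\bm{u},k\Delta t)-c_w\epsilon_d-L_w\Delta t$, which is nonnegative by \eqref{eq:roboust safety condition HOBF}.

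With $\tilde{\zeta}\ge0$ established pointwise on the interval, I would close the argument with the nested comparison-lemma cascade underlying Definition \ref{high order barrier definition}. By construction $\tilde{\zeta}\ge0$ is equivalent to $\dot{H}_{r-1}\ge-\alpha_{r-1}(H_{r-1})$ along \eqref{eq:perturbed nonlinear system}; since $\alpha_{r-1}$ is class $\mathcal{K}$ and $H_{r-1}(k\Delta t)\ge0$, the comparison lemma gives $H_{r-1}(t)\ge0$ on the interval. The defining relation \eqref{eq:barriers cascade}, $H_{r-1}=\dot{H}_{r-2}+\alpha_{r-2}(H_{r-2})$, then reads $\dot{H}_{r-2}\ge-\alpha_{r-2}(H_{r-2})$, and applying the comparison lemma again (using $H_{r-2}(k\Delta t)\ge0$) keeps $H_{r-2}\ge0$. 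Iterating down to $H_0=h$ shows every $H_i(t)\ge0$, i.e. $\tilde{\bm{x}}(t)\in\cap_{r-1}^{0}\mathcal{C}_{S_i}(t)$ throughout $[k\Delta t,(k+1)\Delta t]$.

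The step I expect to be delicate is the implicit \emph{circularity} in the bounds: $c_w$ and $L_w$ are maxima over $\mathbb{D}$ (resp. $\mathcal{Q}$), so the two estimates above are only legitimate while $\tilde{\bm{x}}(t)$ remains in $\mathbb{D}$, yet that is part of what must be proven. I would remove this by a first-exit/continuity argument: letting $t^\star$ be the supremum of times up to which $\tilde{\bm{x}}$ stays in $\cap_{r-1}^{0}\mathcal{C}_{S_i}\subseteq\mathbb{D}$, on $[k\Delta t,t^\star)$ the bounds are valid, so the cascade forces the state to remain in the safe set, and continuity of $t\mapsto H_i(\tilde{\bm{x}}(t),t)$ extends membership to $t^\star$, contradicting a premature exit unless $t^\star=(k+1)\Delta t$. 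A secondary technical point is verifying that the assumed $C^{r+1}$ regularity of $\bm{f},\bm{g},h$ (and the class-$C^{r-j}$ smoothness of the $\alpha_j$) indeed makes $\dot{\tilde{\zeta}}$ continuous, hence $L_w<\infty$ on the compact set $\mathcal{Q}$, and that piecewise differentiability of $\bm{d}$ suffices for the fundamental theorem of calculus applied to $\tilde{\zeta}$.
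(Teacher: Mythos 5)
Your proposal is correct and follows essentially the same route as the paper: the same two-term decomposition (Cauchy--Schwarz for the $c_w\epsilon_d$ disturbance gap at the sampling instant, a Lipschitz/FTC bound for the $L_w\Delta t$ inter-sample drift), followed by the HOCBF invariance conclusion, with the circularity of the $\mathbb{D}$-dependent bounds resolved by a first-exit/contradiction argument just as in the paper. The only cosmetic difference is that you unpack the comparison-lemma cascade explicitly where the paper cites \cite[Thm.~5]{xiao_control_2019}, and the paper additionally invokes \cite[Prop.~C.3.6]{sontag_mathematical_2013} to extend existence of the solution to the full sampling interval once it is confined to the compact set $\mathbb{D}$, a step your continuation argument covers implicitly.
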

\begin{proof}
Given a constant feedback input $\bm{u}(k \Delta t)$ on interval $[k\Delta t,(k+1)\Delta t]$, we assume that $\bm{u}(k \Delta t)$ and $\bm{d}$ are bounded on $[k\Delta t,(k+1)\Delta t)$ according to \eqref{eq:control and disturbances bounds}. Since $\bm{f}$ and $\bm{g}$ in \eqref{eq:perturbed nonlinear system} are $C^{r+1}$ and $\bm{d}(t)$ is piece-wise differentiable, then the solution $\tilde{\bm{x}}(t)$ to \eqref{eq:perturbed nonlinear system} is uniquely defined on an interval $[k\Delta t,\tau]\subset [k\Delta t,(k+1)\Delta t)$ for some $\tau \leq (k+1)\Delta t$ \cite[Thm. 54]{sontag_mathematical_2013}. In addition, since $\tilde{\bm{x}}(k\Delta t) \in \cap^0_{r\text{-}1}\mathcal{C}_{S_i}(t) \subset \mathbb{D}$ and by continuity of $\tilde{\bm{x}}(t)$ there exists $\tau_0\in [k\Delta t,\tau]$ such that $\tilde{\bm{x}}(t) \in \mathbb{D} \; \forall t\in [k\Delta t,\tau_0]$. The function $\tilde{\zeta}$ on $[k\Delta t,\tau_0]$ is then written as a functions of time
$\tilde{\zeta}(\tilde{\bm{x}}(t),\bm{u}(k\Delta t),\bm{d}(t),t) =\pdv{H_{r\text{-}1}(\tilde{\bm{x}}(t),t)}{t}+\mathcal{L}_fH_{r\text{-}1}(\tilde{\bm{x}}(t),t)+  \mathcal{L}_gH_{r\text{-}1}(\tilde{\bm{x}}(t),t)(\bm{u}(k\Delta t)+\bm{d}(t)) + \alpha_{r\text{-}1}(H_{r\text{-}1}(\tilde{\bm{x}}(t),t)), \forall t \in [k\Delta t,\tau_0].
$
Note that $\tilde{\zeta}(\tilde{\bm{x}}(t),\bm{u}(k\Delta t),\bm{d}(t),t)$ is a piece-wise differentiable function of time as all the functions within its definition are continuously differentiable apart from $\bm{d}(t)$, which is only piece-wise differentiable. Since this function respects the conditions in \cite[Prop 4.1.2]{scholtes_introduction_2012}, the Lipschitz constant for $\tilde{\zeta}(\tilde{\bm{x}}(t),\bm{u}(k\Delta t),\bm{d}(t),t)$ exists and is given by
$
L_w =\underset{(\tilde{\bm{x}},\bm{u},\delta\bm{d},t) \in \mathcal{Q}}{max}|\dot{\tilde{\zeta}}(\tilde{\bm{x}},\bm{u}(k\Delta t),\delta\bm{d},t)|.
$
Introducing the shorthand notation $\tilde{\zeta}_{k}(\tilde{x}(t))\triangleq\tilde{\zeta}(\tilde{\bm{x}}(t),\bm{u}(k\Delta t),\bm{d}(t),t)$ and
$\zeta_{k}(\bm{x}(t))\triangleq\zeta(\bm{x}(t),\bm{u}(k\Delta t),t)
$
we can apply the Lipschitz continuity property on $\tilde{\zeta}_k(\tilde{\bm{x}}(t))$ such that $
|\tilde{\zeta}_{k}(\tilde{\bm{x}}(t_2))-\tilde{\zeta}_{k}(\tilde{\bm{x}}(t_1))| \leq L_w |t_2-t_1| \; \forall t_1,t_2\in [k\Delta t,\tau_0]
$. The maximum negative variation of $\tilde{\zeta}_{k}(\tilde{\bm{x}}(t))$ in the interval $[k\Delta t,\tau_0]$ then satisfies
\begin{equation}\label{mid term inequality}
\begin{aligned}
\tilde{\zeta}_k(\tilde{\bm{x}}(t_2))-\tilde{\zeta}_k(\tilde{\bm{x}}(t_1)) \geq - L_w |t_2-t_1| \; \forall t_1,t_2\in [k\Delta t,\tau_0].
\end{aligned}
\end{equation}
Recalling that at time $k\Delta t$, $\tilde{\bm{x}}(k\Delta t) = \bm{x}(k\Delta t)$
the following relation between $\tilde{\zeta}_k(\tilde{\bm{x}}(k\Delta t))$ and $\zeta_k(\bm{x}(k\Delta t))$ holds 
$\tilde{\zeta}_k(\tilde{\bm{x}}(k\Delta t)) = \zeta_k(\bm{x}(k\Delta t)) + \mathcal{L}_gH_{r\text{-}1}(\tilde{\bm{x}}(k\Delta t),k\Delta t)\bm{d}(k\Delta t) \geq \zeta_k(\bm{x}(k\Delta t)) - \underset{(\tilde{\bm{x}},t)\in \mathbb{D}\times\mathbb{I}}{max}||\mathcal{L}_gH_{r\text{-}1}(\tilde{\bm{x}},t)||\,||\bm{d}|| = \zeta_k(\bm{x}(k\Delta t)) - c_w\epsilon_d$.
Replacing $t_1$ with $k\Delta t$, $t_2$ with $t$ in \eqref{mid term inequality} and by adding and subtracting $\tilde{\zeta}_k(\tilde{\bm{x}}(k\Delta t))$ on the LHS and RHS we obtain
\begin{equation}\label{eq:final zeta inequality}
\begin{aligned}
&\tilde{\zeta}_k(\tilde{\bm{x}}(t)) \geq -L_w\Delta t + \tilde{\zeta}_k(\tilde{\bm{x}}(k\Delta t)) \\&\geq -L_w\Delta t + \zeta_k(\bm{x}(k\Delta t)) - c_w\epsilon_d\;\forall t\in [k\Delta t,\tau_0].
\end{aligned}
\end{equation}
Replacing \eqref{eq:roboust safety condition HOBF} from the lemma statement in \eqref{eq:final zeta inequality} it is evident that $\tilde{\zeta}_k(\tilde{\bm{x}}(t)) \geq 0 \; \forall t\in [k\Delta t,\tau_0]$ which by \cite[Thm. 5]{xiao_control_2019} ensures that $\tilde{\bm{x}}(t)\in \cap^0_{r\text{-}1}\mathcal{C}_{S_i}(t)\; \forall t \in [k\Delta t,\tau_0]$. We will now prove that $\tilde{\bm{x}}(t) \in \cap^0_{r\text{-}1}\mathcal{C}_{S_i}(t)\; \forall t \in [k\Delta t,\tau]$ by contradiction. Suppose instead that for some $\tau_{a} \in\left(\tau_{0}, \tau\right], \tilde{\bm{x}}\left(\tau_{a}\right) \in \mathbb{D} \backslash \cap^0_{r\text{-}1}\mathcal{C}_{S_i}(t)$ and $\tilde{\bm{x}}(t) \in \mathbb{D}$ for all $t \in\left[k \Delta t,\tau_a \right]$ (i.e., the solution has left $\cap^0_{r\text{-}1}\mathcal{C}_{S_i}(t)$, but not $\left.\mathbb{D}\right)$. Then $\tilde{\bm{x}}(t)$ must leave $\cap^0_{r\text{-}1}\mathcal{C}_{S_i}(t)$ at some $t<\tau_{a}$. Furthermore, since the closed-loop dynamics are differentiable on $\mathbb{D}$, $\tilde{\bm{x}}(t)$ is uniquely defined on $\left[k \Delta t, \tau_{a}\right]$ (this is shown by repeatedly applying \cite[Thm. 54]{sontag_mathematical_2013},  since $\tilde{\bm{x}}(t)$ remains in $\mathbb{D}$ over which local differentiability of the closed-loop dynamics holds). To leave $\cap^0_{r\text{-}1}\mathcal{C}_{S_i}(t)$, $\dot{H}_{r\text{-}1}(\tilde{\bm{x}}, t)<0$ must hold on $\partial (\cap^0_{r\text{-}1}\mathcal{C}_{S_i}(t))$. The maximum negative variation of $\tilde{\zeta}_k(\tilde{\bm{x}}(t))$ is then recomputed over the interval $[k\Delta t,\tau_a]$ and is again obtained that $\tilde{\zeta}_k(\tilde{\bm{x}}(t))\geq 0 \;\forall t\in[k\Delta t, \tau_a]$ as $L_w$ and $c_w$ are independent of $\tau_a$ and $\tau_0$. Therefore we see that $\dot{H}_{r\text{-}1}(\tilde{\bm{x}}, t) \geq 0$ holds for any $\tilde{\bm{x}}(t) \in \cap^0_{r\text{-}1}\mathcal{C}_{S_i}(t), t \in\left[k \Delta t, \tau_{a}\right]$ such that $\tau_a\leq (k+1)\Delta t$. Hence, the contradiction is  reached, and so $\tilde{\bm{x}}(t)$ can never leave $\cap^0_{r\text{-}1}\mathcal{C}_{S_i}(t)$ (and thus $\mathbb{D})$ on $t \in\left[k \Delta t, \tau\right]$ with $\tau\leq (k+1)\Delta t$. Since it was showed that $\tilde{\bm{x}}(t)$ remains in a compact subset on the interval $\left[k \Delta t, \tau\right]$ (namely $\mathbb{D}$), then $\tilde{\bm{x}}(t)$ exists and is unique over the whole interval $\left[k \Delta t, (k+1)\Delta t\right]$ \cite[Prop. C.3.6]{sontag_mathematical_2013}. By the same arguments applied for the previous sub-intervals, we prove that  $\tilde{\zeta}_k(\tilde{\bm{x}}(t))\geq 0 \; \forall t\in[k\Delta t, (k+1)\Delta t]$ ensuring that $\tilde{\bm{x}}(t)\in \cap^0_{r\text{-}1}\mathcal{C}_{S_i}(t)\;\forall t \in [k\Delta t,(k+1)\Delta t]$ by \cite[Thm.5]{xiao_control_2019}.
\end{proof}
As in \cite{breeden_control_2021}, satisfying \eqref{eq:roboust safety condition HOBF} might require excessive control authority given $c_w$ and $L_w$ are calculated on $\mathcal{Q}$. For this reason, we introduce the definition of reachable set, where locally valid parameters $L_w^l$ and $c_w^l$ can be defined.
\begin{definition}\label{reachable set}
    Given a set $\mathcal{N}\subseteq\tilde{\mathbb{X}}$ we define ${}^{\Delta t}\mathcal{R}(\mathcal{N})$ as the set of all states $\tilde{\bm{x}}\in\tilde{\mathbb{X}}$ that can be reached from $\tilde{\bm{x}}\in \mathcal{N}$ in a time interval $\Delta t$ with state dynamics as in \eqref{eq:perturbed nonlinear system} and under available control input $\bm{u}\in\mathbb{U}$ according to \eqref{eq:control and disturbances bounds}.
\end{definition}
The following Lemma is proposed as a less conservative modification of Lemma \ref{discrete safety roboust HOCBF}.
\begin{lemma}\label{discrete safety roboust HOCBF local}
Consider the perturbed control affine system \eqref{eq:perturbed nonlinear system} where functions $\bm{g} :\mathbb{R}^{n} \times \mathbb{R}_{\geq 0} \rightarrow \mathbb{R}^{n \times m}$ and  $\bm{f} : \mathbb{R}^{n} \times \mathbb{R}_{\geq 0} \rightarrow \mathbb{R}^{n}$ are at least $C^{r+1}$ in  $\tilde{\bm{x}}$ and $t$ on set $\tilde{\mathbb{X}}\times\mathbb{I}$. Let $\bm{d}(t)\in\mathbb{W}$ be a bounded unknown piece-wise differentiable disturbance defined on the compact set $\mathbb{W}$ such that $\|\bm{d}(t)\| \leq \epsilon_d \; \forall t\in \mathbb{I}$. Let $[k\Delta t,(k+1)\Delta t]\subset \mathbb{I}$ be a sampling interval for some $\Delta t >0$ such that \eqref{eq:perturbed nonlinear system} is subject to a constant bounded feedback control input $\bm{u}(t) = \bm{u}(\tilde{\bm{x}}(k\Delta t),k\Delta t)\in \mathbb{U} \; \forall t \in [k\Delta t,(k+1)\Delta t]$ shorthanded as $\bm{u}(k\Delta t)$. Furthermore consider a HOCBF $h:\mathbb{D} \times \mathbb{I} \rightarrow \mathbb{R}$ of relative degree $r$ as in Def. \ref{high order barrier definition}  that is at least 
$C^{r+1}$ on $\mathbb{D}\times\mathbb{I}$ and where $\alpha_j$ is $C^{r-j} \; \forall j=0,\ldots r-1 $. 
Let $\cap^0_{r\text{-}1}\mathcal{C}_{S_i}(t)\subset \mathbb{D}$ be the safe set as in \eqref{eq:safe sets cascade}. Given that at time instant $k\Delta t$, $\tilde{\bm{x}}(k\Delta t) \in \cap^0_{r\text{-}1}\mathcal{C}_{S_i}(t)$, $\bm{x}(k\Delta t)\triangleq\tilde{\bm{x}}(k\Delta t)$ and that the constant feedback control input $\bm{u}(k\Delta t) \in \mathbb{U}$ respects the condition 
$\zeta(\bm{x}(k\Delta t),\bm{u}(k\Delta t),k\Delta t) -\quad L^l_w(\tilde{\bm{x}}(k\Delta t),\Delta t)\Delta t - c^l_w(\tilde{\bm{x}}(k\Delta t),\Delta t)\epsilon_d \geq 0$
where $L^l_w(\tilde{\bm{x}}(k\Delta t),\Delta t)$ is defined over the set ${}^{\Delta t}\mathcal{Q}(\cap^0_{r\text{-}1}\mathcal{C}_{S_i}(t))\triangleq {}^{\Delta t}\mathcal{R}(\cap^0_{r\text{-}1}\mathcal{C}_{S_i}(t))\times\mathbb{U}\times\mathbb{W}\times[k\Delta t,(k+1)\Delta t]$ as
$$
L^l_w(\tilde{\bm{x}}(k\Delta t),\Delta t) = \underset{\substack{(\tilde{\bm{x}},\bm{u},\bm{d},t) \in \\{}^{\Delta t}\mathcal{Q}(\cap^0_{r\text{-}1}\mathcal{C}_{S_i}(k\Delta t))}}{max}|\dot{\tilde{\zeta}}(\tilde{\bm{x}},\bm{u}(k \Delta t),\delta\bm{d},t)|,
$$
and the constant $c^l_w(\tilde{\bm{x}}(k\Delta t),\Delta t)$ is defined as
$$
c^l_w(\tilde{\bm{x}}(k\Delta t),\Delta t)  = \underset{{}^{\Delta t}\mathcal{R}(\cap^0_{r\text{-}1}\mathcal{C}_{S_i}(k\Delta t))\times\mathbb{I}}{max}||\mathcal{L}_g\mathcal{L}_{f}^{r-1}h(\tilde{\bm{x}},t)||.
$$
Then, $\forall t \in [k\Delta t,(k+1)\Delta t]$, it holds that
\begin{equation}\label{eq:safe set invariance in discrete time steps}
\tilde{\bm{x}}(k \Delta t) \in \cap_{r\text{-}1}^{0}\mathcal{C}_{S_i}(k \Delta t) \Rightarrow \tilde{\bm{x}}(t) \in \cap_{r\text{-}1}^{0}\mathcal{C}_{S_i}(t).
\end{equation}
\end{lemma}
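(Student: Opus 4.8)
The plan is to replay the proof of Lemma \ref{discrete safety roboust HOCBF} almost verbatim, the single structural change being that the global constants $L_w$ and $c_w$, computed over $\mathcal{Q}=\mathbb{D}\times\mathbb{U}\times\mathbb{W}\times\mathbb{I}$, are replaced by the localized constants $L_w^l$ and $c_w^l$ computed over the smaller set ${}^{\Delta t}\mathcal{Q}(\cap^0_{r\text{-}1}\mathcal{C}_{S_i}(k\Delta t))$. For this substitution to be legitimate, the entire sampled trajectory $\tilde{\bm{x}}(t)$, $t\in[k\Delta t,(k+1)\Delta t]$, must remain inside the reachable set ${}^{\Delta t}\mathcal{R}(\cap^0_{r\text{-}1}\mathcal{C}_{S_i}(k\Delta t))$ over which those constants are taken as maxima; establishing this containment is the only genuinely new ingredient relative to Lemma \ref{discrete safety roboust HOCBF}.

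First I would invoke \cite[Thm. 54]{sontag_mathematical_2013} exactly as before: since $\bm{f},\bm{g}$ are $C^{r+1}$, the input $\bm{u}(k\Delta t)$ is constant, and $\bm{d}(t)$ is piece-wise differentiable, the closed-loop solution $\tilde{\bm{x}}(t)$ exists and is unique on some $[k\Delta t,\tau]\subseteq[k\Delta t,(k+1)\Delta t)$. The key observation is then that, by Definition \ref{reachable set}, every state reachable from the initial condition $\tilde{\bm{x}}(k\Delta t)\in\cap^0_{r\text{-}1}\mathcal{C}_{S_i}(k\Delta t)$ within elapsed time at most $\Delta t$ under an admissible input $\bm{u}\in\mathbb{U}$ and disturbance $\bm{d}\in\mathbb{W}$ belongs, by construction, to ${}^{\Delta t}\mathcal{R}(\cap^0_{r\text{-}1}\mathcal{C}_{S_i}(k\Delta t))$. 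Hence $\tilde{\bm{x}}(t)\in{}^{\Delta t}\mathcal{R}(\cap^0_{r\text{-}1}\mathcal{C}_{S_i}(k\Delta t))$ for every $t\in[k\Delta t,(k+1)\Delta t]$, and the tuple $(\tilde{\bm{x}}(t),\bm{u}(k\Delta t),\bm{d}(t),t)$ lies in ${}^{\Delta t}\mathcal{Q}(\cap^0_{r\text{-}1}\mathcal{C}_{S_i}(k\Delta t))$ throughout. Consequently $L_w^l$ is a valid Lipschitz constant for the scalar map $t\mapsto\tilde{\zeta}_k(\tilde{\bm{x}}(t))$ along this trajectory by the same \cite[Prop 4.1.2]{scholtes_introduction_2012} argument, and $\|\mathcal{L}_gH_{r\text{-}1}(\tilde{\bm{x}}(k\Delta t),k\Delta t)\|\le c_w^l$, where I would record the identity $\mathcal{L}_gH_{r\text{-}1}=\mathcal{L}_g\mathcal{L}_f^{r-1}h$ that follows from the relative-degree-$r$ cascade \eqref{eq:barriers cascade}, so that the $c_w^l$ of the statement indeed bounds the disturbance term.

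With these localized constants in hand, the remainder is identical to Lemma \ref{discrete safety roboust HOCBF}, using the shorthands $\tilde{\zeta}_k(\tilde{\bm{x}}(t))\triangleq\tilde{\zeta}(\tilde{\bm{x}}(t),\bm{u}(k\Delta t),\bm{d}(t),t)$ and $\zeta_k(\bm{x}(t))\triangleq\zeta(\bm{x}(t),\bm{u}(k\Delta t),t)$ introduced there: the Lipschitz bound yields $\tilde{\zeta}_k(\tilde{\bm{x}}(t))\ge -L_w^l\Delta t+\tilde{\zeta}_k(\tilde{\bm{x}}(k\Delta t))$, the relation at the sampling instant gives $\tilde{\zeta}_k(\tilde{\bm{x}}(k\Delta t))\ge\zeta_k(\bm{x}(k\Delta t))-c_w^l\epsilon_d$, and chaining these with the hypothesis $\zeta(\bm{x}(k\Delta t),\bm{u}(k\Delta t),k\Delta t)-L_w^l\Delta t-c_w^l\epsilon_d\ge0$ forces $\tilde{\zeta}_k(\tilde{\bm{x}}(t))\ge0$ on the interval. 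Forward invariance of $\cap^0_{r\text{-}1}\mathcal{C}_{S_i}(t)$ then follows from \cite[Thm. 5]{xiao_control_2019}, and the boundary/contradiction step ruling out an earlier exit time $\tau_a$ carries over unchanged, since $L_w^l$ and $c_w^l$ are, like their global counterparts, independent of the particular $\tau_0,\tau_a$.

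The main obstacle is the containment claim of the second paragraph: one must be careful that the reachable set is taken from the \emph{entire} safe set $\cap^0_{r\text{-}1}\mathcal{C}_{S_i}(k\Delta t)$ rather than from the single point $\tilde{\bm{x}}(k\Delta t)$, so that $L_w^l,c_w^l$ are uniform over all admissible initial conditions in the safe set and the apparent self-referential dependence of the constants on the trajectory they are meant to bound is avoided. A secondary point worth verifying is that ${}^{\Delta t}\mathcal{R}(\cap^0_{r\text{-}1}\mathcal{C}_{S_i}(k\Delta t))\subseteq\mathbb{D}$, so that the maxima defining $L_w^l$ and $c_w^l$ are attained on a compact set and the $C^{r+1}$ regularity of $\bm{f},\bm{g}$ guarantees their finiteness; this mirrors the compactness already assumed in \eqref{eq:control and disturbances bounds}.
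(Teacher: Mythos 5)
Your proposal is correct and follows essentially the same route as the paper: the paper's own proof is a one-line reduction to Lemma \ref{discrete safety roboust HOCBF}, justified precisely by your key observation that every solution over $[k\Delta t,(k+1)\Delta t]$ starting in $\cap^0_{r\text{-}1}\mathcal{C}_{S_i}(k\Delta t)$ remains in ${}^{\Delta t}\mathcal{R}(\cap^0_{r\text{-}1}\mathcal{C}_{S_i}(k\Delta t))$, so the localized constants $L_w^l$ and $c_w^l$ may replace the global ones throughout the earlier argument. Your additional checks (uniformity of the constants over the whole safe set rather than a single initial state, and the identification $\mathcal{L}_gH_{r\text{-}1}=\mathcal{L}_g\mathcal{L}_f^{r-1}h$) are sensible elaborations of details the paper leaves implicit.
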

\begin{proof}
The proof follows from the proof of Lemma \ref{discrete safety roboust HOCBF} by noting that inside the interval $[k\Delta t,(k+1)\Delta t]$ all the solutions $\tilde{\bm{x}}(t)$ to \eqref{eq:perturbed nonlinear system} are inside ${}^{\Delta t}\mathcal{R}(\cap^0_{r\text{-}1}\mathcal{C}_{S_i}(k\Delta t))$.
\end{proof}
\vspace{-2mm}
Let us now define a sampled-data HOCBF (SD-HOCBF).
\begin{definition}[SD-HOCBF]
\label{def:SDHOCBF}
Consider a HOCBF $h: \mathbb{D} \times \mathbb{I} \rightarrow \mathbb{R}$ as in Definition \ref{high order barrier definition} with relative degree $r$ with respect to \eqref{eq:perturbed nonlinear system} and corresponding safe set $\cap^0_{r\text{-}1}\mathcal{C}_{S_i}(t)$ according to \eqref{eq:safe sets cascade}. The function $h$ is a sampled-data HOCBF (SD-HOCBF) for a given $\Delta t>0$  if for any point $\tilde{\bm{x}} \in \cap^0_{r\text{-}1}\mathcal{C}_{S_i}(k\Delta t)$ and $k \in \mathbb{N}_{0}$, there is a constant feedback input $\bm{u}(\bm{x}(k \Delta t), k \Delta t) \in \mathbb{U}$, shortened to $\bm{u}(k \Delta t)$, where $\bm{x}(k\Delta t)\triangleq\tilde{\bm{x}}(k\Delta t)$ , such that
\begin{equation}\label{eq:sdhobf main definition equation}
\begin{aligned}
&\underset{\bm{u}\in \mathbb{U}}{sup}[\zeta(\bm{x}(k \Delta t),\bm{u}(\Delta t),k\Delta t)  -\\& \quad L^l_w(\tilde{\bm{x}}(k\Delta t),\Delta t) \Delta t - c^l_w(\tilde{\bm{x}}(k\Delta t),\Delta t)\epsilon_d] \geq 0.
\end{aligned}
\end{equation}
\end{definition}

\section{CONTROL STRATEGY}
\label{control}
Here we present the two SD-HOCBF that will be used in CMPC scheme applied to solve Problem \ref{main problem}, 
 \begin{subequations}\label{eq:the two barrier constarints}
 \begin{equation}\label{eq:position barrier function}
h_{\dr}(\tilde{\bm{x}},t) = \epsilon^2_{\dr} -  \|\delta \tilde{\bm{r}} - \delta \bm{r}_r(t)\|^2 = \epsilon^2_{\dr} -  \|\tilde{\bm{e}}_{\dr}(t)\|^2,
\end{equation}
\begin{equation}\label{eq:velocity barrier function}
h_{\dvel}(\tilde{\bm{x}},t) = \epsilon^2_{\dvel} -  \|\delta \tilde{\bm{v}} - \delta \bm{v}_r(t)\|^2 = \epsilon^2_{\dvel} -  \|\tilde{\bm{e}}_{\dvel}(t)\|^2
\end{equation}
\end{subequations}
where $h_{\dr}(\tilde{\bm{x}},t)$ bounds the position error to tackle Problem \ref{main problem}, and $h_{\dvel}(\tilde{\bm{x}},t)$ bounds the maximum velocity error to provide recursive feasibility guarantees on the satisfaction of the former, as we will demonstrate.\par
It can be shown that $h_{\dr}(\tilde{\bm{x}},t)$ is a relative degree two HOCBF for \eqref{eq:perturbed nonlinear system} while $h_{\dvel}(\tilde{\bm{x}},t)$ is relative degree one. Given the class-$\mathcal{K}$ functions $\alpha_{\dr_0}(x) = p_{\dr_0}x$, $\alpha_{\dr_1}(x) = p_{\dr_1}x$ for $h_{\dr}$ and $\alpha_{\dvel_0}(x) = p_{\dvel_0}x$ for $h_{\dvel}$, with $p_{\dr_0},p_{\dr_1},p_{\dvel_0}\in \mathbb{R}_{>0}$, we define functions $\tilde{\zeta}_{\dr}$ and $\tilde{\zeta}_{\dvel}$ as
\begin{subequations}\label{eq:zeta def}
\begin{equation}\label{eq:zeta dvel tilde}
\tilde{\zeta}_{\dvel}=-2\tilde{\bm{e}}^T_{\dvel}(\bm{f}_v+ \bm{u} + \bm{d}) + p_{\dvel±_0}(\epsilon_{\dvel}^2 - \|\tilde{\bm{e}}_{\dvel}\|^2),
\end{equation}
\begin{equation}\label{eq:position zeta}
\begin{aligned}
&\tilde{\zeta}_{\dr} =-2\|\tilde{\bm{e}}_{\dvel}\|^2 - 2\tilde{\bm{e}}_{\dr}^T(\bm{f}_v + \bm{d}+\bm{u})-\\ &  2(p_{\dr_0}+p_{\dr_1})(\tilde{\bm{e}}_{\dr}^T\tilde{\bm{e}}_{\dvel})+ p_{\dr_0}p_{\dr_1}(\epsilon_{\dr}^2 - \|\tilde{\bm{e}}_{\dr}\|^2).
\end{aligned}
\end{equation}
\end{subequations}
The safe set definitions are then given by
$
\mathcal{C}_{S,\dr}(t) = \{\tilde{\bm{x}}\in\tilde{\mathbb{X}} : H_{\dr_0}(\tilde{\bm{x}},t) \geq 0 \land  H_{\dr_1}(\tilde{\bm{x}},t) \geq 0 \}, 
\mathcal{C}_{S,\dvel}(t) = \{\tilde{\bm{x}}\in\tilde{\mathbb{X}} : H_{\dvel_0}(\tilde{\bm{x}},t) \geq 0 \}, 
\mathcal{C}_{S,x}(t) = \mathcal{C}_{S,\dvel}(t) \cup  \mathcal{C}_{S,\dr}(t)\}$, with $H_{\dr_1}$,$H_{\dr_0}$,$H_{\dvel_0}$ defined in \eqref{eq:barriers cascade}. 
It is evident from \eqref{eq:position zeta} that $\|\tilde{\bm{e}}_{\dvel}\|$ may grow unbounded inside the safe set if not bounded by $h_{\dvel}(\tilde{\bm{x}},t)$. As $\tilde{\zeta}_{\dr}$ (and $\zeta_{\dr}$) depends directly on $\|\tilde{\bm{e}}_{\dvel}\|$, it is crucial to ensure boundedness of such term to analyze the satisfaction of \eqref{eq:sdhobf main definition equation}. In the remainder, we will refer to $L^l_w(\tilde{\bm{x}},t)$ and $c^l_w(\tilde{\bm{x}},t)$ from Lemma \ref{discrete safety roboust HOCBF local} as computed for $h_{\dr}$ and $h_{\dvel}$ with $L^l_{\dr}$,$c^l_{\dr}$ and $L^l_{\dvel}$,$c^l_{\dvel}$ respectively.\par 
Computing $L^l_w(\tilde{\bm{x}},\Delta t)$ and $c^l_w(\tilde{\bm{x}},\Delta t)$ requires the online computation of ${}^{\Delta t}\mathcal{R}(\mathcal{C}_{S,x}(t))$. 
While an analytical definition is impractical, we propose to find a set ${}^{\Delta t }\bar{\mathcal{R}}(\mathcal{C}_{S,x}(t)) \supset {}^{\Delta t }\mathcal{R}(\mathcal{C}_{S,x}(t))$ such that suitable constants $\bar{L}^l_w(\Delta t)\geq L^l_w(\tilde{\bm{x}},\Delta t)$, $\bar{c}^l_w(\Delta t)\geq c^l_w(\tilde{\bm{x}},\Delta t)$ can be computed only based on the current sampling time $\Delta t$. This result is achieved in two steps: i) we first derive a constant $\bar{a}$ that upper bounds the maximum total acceleration that \eqref{eq:perturbed nonlinear system} is subject to; and ii) based on $\bar{a}$ we compute the maximum position and velocity tracking errors that can be reached by \eqref{eq:perturbed nonlinear system} in a sampling interval $[k\Delta t,(k+1)\Delta t]$. Namely, we compute the maximum acceleration that \eqref{eq:perturbed nonlinear system} is subject  as
\begin{equation}\label{eq:max acc}
 \underset{(\bm{x},t)\in \tilde{\mathbb{X}}\times \mathbb{I}}{max}\|\delta \dot{\tilde{\bm{v}}}\|=\|\bm{f}_{v}+ \bm{u} + \bm{d}\| \leq \epsilon_f + \epsilon_u + \epsilon_d = \bar{a}   
\end{equation}
with $\epsilon_f = \underset{(\tilde{\bm{x}},t)\in \tilde{\mathbb{X}}\times \mathbb{I}}{max}\|\bm{f}_v(\tilde{\bm{x}},t)\|$. Note that in \eqref{eq:max acc} the function $\bm{g}$ was omitted as it is the identity matrix. Next,  the maximum velocity and position tracking errors are computed assuming that \eqref{eq:perturbed nonlinear system} is forced under constant acceleration $\bar{a}$ during the sampling interval $[k\Delta t,(k+1)\Delta t]$. Under these assumptions, we compute the maximum velocity and position tracking errors in first-order approximation for the time interval $[k\Delta t,(k+1)\Delta t]$ as $\|\tilde{\bm{e}}_{\dvel}(t) - \tilde{\bm{e}}_{\dvel}(k\Delta t)\| \leq 
\|\delta \tilde{\bm{v}}(t) - \delta \tilde{\bm{v}}(k\Delta t)\| + \|\delta \bm{v}_r(t) - \delta \bm{v}_r(k\Delta t)\| \leq \bar{a}\Delta t + \bar{a}_r\Delta t \;\forall t\in [k\Delta t,(k+1)\Delta t]$
and 
$\|\tilde{\bm{e}}_{\dr}(t) - \tilde{\bm{e}}_{\dr}(k\Delta t)\| \leq \|\delta \tilde{\bm{r}}(t) - \delta \tilde{\bm{r}}(k\Delta t)\| + \|\delta \bm{r}_r(t) - \delta \bm{r}_r(k\Delta t)\| \leq \hspace{0.2cm} \bar{a}\frac{\Delta t^2}{2} + \epsilon_{\dvel}\Delta t + \bar{a}_r\frac{\Delta t^2}{2} + \bar{v}_{r}\Delta t\;\forall t\in [k\Delta t,(k+1)\Delta t]
$; where $\bar{v}_r\triangleq\underset{t\in\mathbb{I}}{max}\|\delta \bm{v}_{r}(t)\|$ and $\bar{a}_r\triangleq\underset{t\in\mathbb{I}}{max}\|\delta \bm{a}_{r}(t)\|$ (Tab. \ref{tab:PRO table}). Note that we applied the formulas for linear displacement under constant acceleration to compute the maximum velocity and position error bounds. Moreover, the constants $\bar{a}$ and $\bar{v}$ exist as a PRO is a periodic trajectory. For completeness, we define $\bar{r}_r\triangleq\underset{t\in\mathbb{I}}{max}\|\delta \bm{r}_{r}(t)\|$. If we define two new functions $\gamma_{\dvel}=\bar{\epsilon}^2_{\dvel}-\|\tilde{\bm{e}}_{\dvel}\|^2$ and $\gamma_{\dr} = \bar{\epsilon}^2_{\dr}-\|\tilde{\bm{e}}_{\dr}\|^2$ with
$\bar{\epsilon}_{\dvel}(\Delta t) \triangleq \epsilon_{\dvel} + \bar{a}\Delta t + \bar{a}_r \Delta t$ and $\bar{\epsilon}_{\dr}(\Delta t) \triangleq \epsilon_{\dr} + \bar{a} \frac{\Delta t^2}{2} +\epsilon_{\dvel}\Delta t + \bar{a}_r \frac{\Delta t^2}{2} +\bar{v}_r\Delta t$; we can then define $
{}^{\Delta t }\bar{\mathcal{R}}(\mathcal{C}_{S,x}(t))$ as $
{}^{\Delta t }\bar{\mathcal{R}}(\mathcal{C}_{S,x}(t)) = \{ \tilde{\bm{x}}\in\tilde{\mathbb{X}} : \gamma_{\dvel}(\tilde{\bm{x}},t)\geq 0 \land \gamma_{\dr}(\tilde{\bm{x}},t)\geq 0\}
$. 
We next show how to compute valid constants $\bar{L}_w^l(\Delta t)\geq L_w^l(\tilde{\bm{x}},\Delta t)$ and $\bar{c}_w^l(\Delta t)\geq c_w^l(\tilde{\bm{x}},\Delta t)$ over ${}^{\Delta t}\bar{\mathcal{R}}(\mathcal{C}_{S,x}(t))$ such that the conditions for Lemma \ref{discrete safety roboust HOCBF local} are still satisfied by $\bar{L}_w$ and $\bar{c}_w$. The time derivative of $\tilde{\zeta}_{\dr}$ and  $\tilde{\zeta}_{\dvel}$ is derived as $
\dot{\tilde{\zeta}}_{\dvel}= -2\tilde{\bm{e}}_{\dvel}^T\left(\dot{\bm{f}}_v+\dot{\bm{d}}\right)-2\|\bm{f}_v + \bm{u} + \bm{d}\|^2-2p_{\dvel_0}\tilde{\bm{e}}_v^T(\bm{f}_v + \bm{u} + \bm{d})
$
and 
$
\dot{\tilde{\zeta}}_{\dr} =
 - 6\tilde{\bm{e}}_{\dvel}^T(\bm{f}_v +\bm{u} + \bm{d} ) - 2\tilde{\bm{e}}_{\dr}^T(\dot{\bm{f}_v} + \dot{\bm{d}})-\\2(p_{\dr_0}+p_{\dr_1})(\tilde{\bm{e}}_{\dvel}^T\tilde{\bm{e}}_{\dvel} + \tilde{\bm{e}}_{\dr}^T(\bm{f}_v +\bm{u} + \bm{d}  ))- 
 p_{\dr_0}p_{\dr_1}(2\tilde{\bm{e}}_{\dvel}^T\tilde{\bm{e}}_{\dr})$. Noting that  $\mathcal{L}_{g}H_{\dvel_0}(\tilde{\bm{x}},t) =-2\tilde{\bm{e}}_{\dvel}$ and $\mathcal{L}_{g}H_{\dr_1}(\tilde{\bm{x}},t) =-2\tilde{\bm{e}}_{\dr}$; we can obtain $\bar{L}_{\dr}^{l}(\Delta t)$,$\bar{L}_{\dvel}^{l}(\Delta t)$,$\bar{c}_{\dr}^{l}(\Delta t)$,$\bar{c}_{\dvel}^{l}(\Delta t)$ by applying the inequality $\bm{a}^T\bm{b}\leq\|\bm{a}\|\|\bm{b}\|$ in $\dot{\tilde{\zeta}}_{\dr}$ and $\dot{\tilde{\zeta}}_{\dvel}$, which yields
$
\bar{L}^l_{\delta v}(\Delta t) = 2\epsilon_{\dvel}\beta +  2\bar{a}^2 +  2p_{\dvel_0}\bar{\epsilon}_{\dvel}\bar{a} \geq L^l_{\delta v}(\tilde{\bm{x}},\Delta t) 
$,
$
\bar{c}^l_{\delta v}(\Delta t) = 2\bar{\epsilon}_{\dvel}\geq c^l_{\delta v}(\tilde{\bm{x}},\Delta t) 
$,
$
\bar{L}^{l}_{\dr}(\Delta t) = 6\bar{a} + 2(p_{\dr_0}+p_{\dr_1})(\bar{\epsilon}^2_{\dvel} + \bar{\epsilon}_{\dr}\bar{a})+2\epsilon_{\dr}\beta + 2p_{\dr_0}p_{\dr_1}(\bar{\epsilon}_{\dr}\bar{\epsilon}_{\dvel})\geq L^l_{\dr}(\tilde{\bm{x}},\Delta t)
$,
$
\bar{c}^{l}_{\dvel}(\Delta t) = 2\bar{\epsilon}_{\dr} \geq c^l_{\dvel}(\tilde{\bm{x}},\Delta t), \forall \tilde{\bm{x}} \in \mathcal{C}_{S,x}(t)
$,
where $\beta = \underset{{}^{\Delta t}\mathcal{Q}(\cap^0_{r\text{-}1}\mathcal{C}_{S_i}(t))}{max}\|\dot{\bm{f}}_v\| + \underset{\tilde{{\mathbb{X}}}\times \mathbb{I}}{max}\|\dot{\bm{d}}\|$.

\subsection{Corridor MPC}
Now that  $\bar{L}^{l}_{\dr}(\Delta t)$,$\bar{L}^{l}_{\dvel}(\Delta t)$,$\bar{c}^{l}_{\dr}(\Delta t)$,$\bar{c}^{l}_{\dvel}(\Delta t)$ are defined, the CMPC scheme \cite[Eqs. 17-18]{roque_corridor_2022} applied to control each inspector along its trajectory is given as
\begin{subequations}\label{eq:CMPC}
\begin{equation}
J_N^*(\tilde{\bm{e}}_x(k \Delta t))=\min _{\mathbf{u}_k^*} J_N(\bm{e}_x, \bm{u}) 
\end{equation}
\begin{equation}\label{eq:dyn const}
\begin{aligned}
\bm{x}((m+1)|k \Delta t)=\eta_d(\bm{x}(m|k\Delta t), \bm{u}(m|k\Delta t),k\Delta t) 
\end{aligned}
\end{equation}
\begin{equation}\label{eq:zeta dvel const}
\zeta_{\dvel}(\bm{x}(0|k\Delta t),\bm{u}(0|k\Delta t),0) - \bar{L}^{l}_{\dvel}\Delta t-\bar{c}^l_{\dvel}\epsilon_d\geq 0
\end{equation}
\begin{equation}\label{eq:zeta dr const}
\zeta_{\dr}(\bm{x}(0|k\Delta t),\bm{u}(0|k\Delta t),0) - \bar{L}^{l}_{\dr}\Delta t-\bar{c}^l_{\dr}\epsilon_d\geq 0
\end{equation}
\begin{equation}\label{eq:control const}
\bm{u}(m|k \Delta t) \in \mathbb{U}, \quad \forall m \in \mathbb{N}_{[0, N-1]} 
\end{equation}
\begin{equation}
\bm{e}_x(n|k \Delta t)=\bm{x}(n|k\Delta t)-\bm{x}_r(n| k \Delta t) 
\end{equation}
\begin{equation}\label{eq:state measurement}
 \bm{x}(0| k \Delta t)=\tilde{\bm{x}}(k \Delta t), \quad \forall n \in \mathbb{N}_{[0, N]}
\end{equation}
\end{subequations}
where
\begin{equation}\label{eq:cost function}
\begin{aligned}
J_N(\bm{e}_x, \bm{u})= & \sum_{n=0}^{N-1} \|\bm{e}_x(n|k \Delta t)\|_{Q} + \|\bm{u}(n| k \Delta t)\|_{R} \\
& +V(\bm{e}_x(N|k \Delta t)),
\end{aligned}
\end{equation}
and where $V(\bm{e}_x(N|k \Delta t))$ is a positive definite function of the state error. The solution to the CMPC is the optimal solution of the finite horizon optimal control problem (FHOC) in \eqref{eq:CMPC}. Such solution is an optimal control trajectory $\bm{u}_k^*= [\bm{u}^*(0|k\Delta t),\ldots \bm{u}^*((N-1)|k\Delta t)]$ and a corresponding optimal state trajectory $\bm{x}^*= [\bm{x}^*(0|k\Delta t),\ldots \bm{x}^*_k((N-1)|k\Delta t)]$ so that the cost function $J_N(\bm{e}, \bm{u})$ is minimised along the $N$-steps receding horizon of the FHOC problem. In \eqref{eq:CMPC}, the constraint \eqref{eq:dyn const} forces the state to evolve according to the nominal discrete dynamics of each inspector, \eqref{eq:zeta dvel const} and  \eqref{eq:zeta dr const}  are the SD-HOCBF constraint on the velocity and position respectively, \eqref{eq:control const} is the control constraint and \eqref{eq:state measurement} sets the initial state of the state trajectory $\bm{x}^*$ to be equal to the measured state $\tilde{\bm{x}}(k\Delta t)$. Once  the solution $\bm{u}_k^*$ to \eqref{eq:CMPC} is found, the feedback control $K(\tilde{\bm{x}},k\Delta t)=\bm{u}^*(0|k\Delta t)$ is applied in a ZOH fashion on \eqref{eq:perturbed nonlinear system} during the interval $[k\Delta t,(k+1)\Delta t]$. Since at $k\Delta t$ we have $\tilde{\bm{x}}\in \mathcal{C}_{S,x}(k\Delta t)$, Lemma \ref{discrete safety roboust HOCBF local} guarantees that $\tilde{\bm{x}}(t)\in \mathcal{C}_{S,x}(t)\,\forall t\in [k\Delta t,(k+1)\Delta t]$ under $K(\tilde{\bm{x}},k\Delta t)$.

Given the maximum control input $\epsilon_u$ and a sampling interval $\Delta t$, the recursive feasibility of the given CMPC scheme can be assessed numerically by applying the feasibility checking algorithm by \cite{tan_compatibility_2022}. Namely, we verify the compatibility of constraints  \eqref{eq:zeta dvel const}-\eqref{eq:control const} by solving the following quadratic program (QP) \begin{subequations}\label{eq:feasibility original}
\begin{equation}\label{eq:minimization}
    \underset{\bm{q},\bm{u}}{min} -\bm{1}^T\bm{q}
\end{equation}
\begin{equation}\label{eq:positive t}
    \bm{I}_{2}\bm{q} \geq \bm{0}\;, \|\bm{u}\|^2 \leq \epsilon_{u}^2 \;
\end{equation}
\begin{equation}\label{eq:position sd-hocbf}
\begin{gathered}
    \zeta_{\dr}(\bm{x},t,\bm{u})  \geq \bar{L}^l_{\dr}\Delta t + \bar{c}^l_{\dr}\epsilon_d+ q_1
\end{gathered}
\end{equation}
\begin{equation}
\begin{gathered}
\zeta_{\dvel}(\bm{x},t,\bm{u})   \geq\bar{L}^l_{\dvel}\Delta t + \bar{c}^l_{\dvel}\epsilon_d + q_2,\label{eq:velocity sd-hocbf}
\end{gathered}
\end{equation}
\end{subequations}
over a dense discretization of $\bm{x}$ and $t$, yielding a seven-dimensional parameter space for problem \eqref{eq:feasibility original}. Note that $\bm{q}=[q_1,q_2]^T$ is only a slack variable that defines how robustly \eqref{eq:position sd-hocbf}-\eqref{eq:velocity sd-hocbf} can be satisfied. We propose to lower the dimensionality of the problem by offering a more conservative solution, based on the fact that $\bm{f}_{v}(\tilde{\bm{x}},t) \leq \epsilon_f \forall (\tilde{\bm{x}},t)\in \tilde{\mathbb{X}}\times\mathbb{I}$. In particular, we 
note that both $\zeta_{\dr}$ and  $\zeta_{\dvel}$ are directly functions of the $\bm{x}$ and $\bm{t}$ through $\bm{f}_{v}(\bm{x},t)$. By upper bounding $\bm{f}_{v}(\bm{x},t)$ with $\epsilon_f$, we can assess the feasibility of the CMPC scheme by solving a new QP parameterised over $\bm{e}_{\dvel}$ and $\bm{e}_{\dr}$ instead of $\bm{x}$ and $t$ \cite[Eq. 11]{tan_compatibility_2022}
\begin{subequations}\label{eq:feasibility grid}
\begin{equation}\label{eq:minimization approx}
    \underset{\bm{q},\bm{u}}{min} -\bm{1}^T\bm{q}
\end{equation}
\begin{equation}\label{eq:positive t approx}
    \bm{I}_{2}\bm{q} \geq \bm{0}\;, \|\bm{u}\|^2 \leq \epsilon_{u}^2 \;
\end{equation}
\begin{equation}\label{eq:modified position sd-hocbf}
\begin{gathered}
    -2\|\tilde{\bm{e}}_{\dvel}\|^2 - 2\|\tilde{\bm{e}}_{\dr}\|\epsilon_f + 2\tilde{\bm{e}}_{\dr}^T\bm{u}-\\2(p_{\dr 0}+p_{\dr 1})(\tilde{\bm{e}}_{\dr}^T\tilde{\bm{e}}_{\dvel})+p_{\dr0}p_{\dr 1}(\epsilon_{\dr}^2 - \|\tilde{\bm{e}}_{\dr}\|^2) \\   \geq \bar{L}^l_{\dr}\Delta t + \bar{c}^l_{\dr}\epsilon_d+ q_1
\end{gathered}
\end{equation}
\begin{equation}
\begin{gathered}
-2\|\tilde{\bm{e}}_{\dvel}\|\epsilon_f -2\tilde{\bm{e}}_{\dvel}^T \bm{u} + p_{\dvel 0}(\epsilon_{\dvel}^2 - \|\tilde{\bm{e}}_{\dvel}\|^2)\\   \geq\bar{L}^l_{\dvel}\Delta t + \bar{c}^l_{\dvel}\epsilon_d + q_2.\label{eq:modified velocity sd-hocbf}
\end{gathered}
\end{equation}
\end{subequations}
where constraints \eqref{eq:modified position sd-hocbf} and \eqref{eq:modified velocity sd-hocbf} are a modified version of the SD-HOCBF constraints in \eqref{eq:zeta dvel const},\eqref{eq:zeta dr const}. More specifically, the terms
$-\tilde{\bm{e}}^T_{\dr}\bm{f}_{v}$ and $-\tilde{\bm{e}}^T_{\dvel}\bm{f}_{v}$ as $-\|\tilde{\bm{e}}_{\dr}\|\epsilon_f$ and $-\|\tilde{\bm{e}}_{\dvel}\|\epsilon_f$ are lower bounded, such that $\zeta_{\dvel}$ and $\zeta_{\dr}$ are maximally negatively decreased by the dynamic acceleration $\bm{f}_{\dvel}$. With such formulation, the satisfaction of \eqref{eq:feasibility grid} only depends on i) the position tracking error magnitude $\|\bm{e}_{\dr}\|$; ii) the velocity tracking error magnitude $\|\bm{e}_{\dvel}\|$; and iii) the angle $\alpha =\arccos(\frac{\bm{e}_{\dvel}\cdot \bm{e}_{\dvel}}{\|\bm{e}_{\dr}\|\|\bm{e}_{\dvel}\|})$. We can then iteratively check for infeasibility of \eqref{eq:feasibility grid} over a dense discretization of $[0,\epsilon_{\dr}]\times[0,\epsilon_{\dvel}]\times [0,\pi]$, which is only a three-dimensional grid instead of seven-dimensional one. This optimization problem can be solved offline during the mission design process. Both $\epsilon_{u}$ and $\Delta t$ are typically constrained by the specific hardware at hand, but they could also be treated as design parameters by increasing the dimensionality of the feasibility test by two. In addition, the design choice of the parameter $\epsilon_{\delta v}$  might need to be reiterated during this process if the feasibility of \eqref{eq:feasibility grid} is not achieved.
\section{RESULTS}
\label{results}
We simulate three inspectors as they orbit the ISS to accomplish the inspection mission (Fig. \ref{fig:PRO visualization}). We leverage the optimization library \textit{CasADi} to solve the nonlinear optimal control scheme \eqref{eq:CMPC} and the QP in \eqref{eq:feasibility grid} offline.
\begin{table}[h]
\vspace{-2mm}
    \centering
    \begin{tabular}{cllll}
    \toprule
    Parameter & Inspector 1 & Inspector 2 & Inspector 3 & units\\
    \midrule
$\bar{L}^l_{\dvel}$  & 1.315$\times 10^{-3} $& 1.410$\times 10^{-3} $& 1.568$\times 10^{-3}$  & \unit{-} \\ 
$\bar{L}^l_{\dr}$    & 5.036$\times 10^{-2} $& 5.414$\times 10^{-2} $& 6.038$\times 10^{-2} $&\unit{-} \\ 
$\bar{c}^l_{\dvel}$  & 2.702$\times 10^{-1} $& 2.703$\times 10^{-1} $& 2.704$\times 10^{-1} $& \unit{-}\\ 
$\bar{c}^l_{\dr}$    & 1.405$\times 10^{1} $& 1.406$\times 10^{1} $& 1.407$\times 10^{1} $& \unit{-}\\ 
$\Delta t$      & 1.000$\times 10^{-1} $& 1.000$\times 10^{-1} $& 1.000$\times 10^{-1} $& \unit{s}\\ 
$\epsilon_f$     & 8.872$\times 10^{-4} $& 1.254$\times 10^{-3} $& 1.860$\times 10^{-3}$& \unit{\meter\second^{-2}}\\ 
$\epsilon_{\dr}$    & 7.000$\times 10^{0} $& 7.000$\times 10^{0} $& 7.000$\times 10^{0}$ & \unit{\meter}\\ 
$\epsilon_{\dvel}$    & 1.330$\times 10^{-1} $& 1.330$\times 10^{-1} $& 1.330$\times 10^{-1} $ &\unit{\meter\second^{-1}}\\ 
$\bar{\epsilon}_{\dr}$  & 7.025$\times 10^{0} $& 7.029$\times 10^{0} $& 7.037$\times 10^{0}$   &\unit{\meter} \\ 
$\bar{\epsilon}_{\dvel}$  & 1.351$\times 10^{-1} $& 1.351$\times 10^{-1} $& 1.352$\times 10^{-1} $&\unit{\meter\second^{-1}}\\ 
$\epsilon_{d}$    & 1.577$\times 10^{-6} $& 2.205$\times 10^{-6} $& 3.243$\times 10^{-6}$         &\unit{\meter\second^{-2}}\\ 
$p_{\delta r_0}$         & 2.000$\times 10^{-2} $& 2.000$\times 10^{-2} $& 2.000$\times 10^{-2}$ &\unit{-}\\ 
$p_{\delta r_1}$        & 5.000$\times 10^{-2} $& 5.000$\times 10^{-2} $& 5.000$\times 10^{-2}$   &\unit{-}\\ 
$p_{\delta v_0}$         & 5.000$\times 10^{-2} $& 5.000$\times 10^{-2} $& 5.000$\times 10^{-2}$  &\unit{-}\\ 
$\bar{a}$         & 2.089$\times 10^{-2} $& 2.126$\times 10^{-2} $& 2.186$\times 10^{-2} $& \unit{\meter\second^{-2}}\\  
$\bar{a}_r$  & 1.266 $\times 10^{-4}$& 1.789$\times 10^{-4} $& 2.653 $\times 10^{-4}$      & \unit{\meter\second^{-2}}\\ 
$\bar{v}_r$  & 1.125$\times 10^{-1} $& 1.590$\times 10^{-1} $& 2.358$\times 10^{-1}$      & \unit{\meter\second^{-1}}\\  
$\epsilon_u$        & 2.000$\times 10^{-2} $& 2.000$\times 10^{-2} $& 2.000$\times 10^{-2}$& \unit{\meter\second^{-2}}\\  
$\beta$        & 6.023$\times 10^{-4} $& 8.236$\times 10^{-4} $& 1.189$\times 10^{-3} $& \unit{\meter\second^{-3}}\\  
$\bar{r}_r$ & 1.000$\times 10^{1} $& 1.414$\times 10^{1} $& 2.096$\times 10^{1} $&\unit{\meter}\\ 
$\alpha_w$ & 0.000$\times 10^{0} $& 0.000$\times 10^{0} $& 0.000$\times 10^{0} $&\unit{\radian}\\ 
$\alpha_r$ & 1.571$\times 10^{0} $& 1.571$\times 10^{0} $& 1.571$\times 10^{0} $&\unit{\radian}\\ 
$\rho_r$ & 5.000$\times 10^{1} $& 6.400$\times 10^{1} $& 7.800$\times 10^{1} $&\unit{\meter}\\ 
$\rho_s$ & 0.000$\times 10^{0} $& 0.000$\times 10^{0} $& 0.000$\times 10^{0} $&\unit{\meter}\\ 
$\rho_w$ & 0.000$\times 10^{0} $& 6.000$\times 10^{1} $& 1.400$\times 10^{1} $&\unit{\meter}\\ 
\bottomrule
    \end{tabular}
    \caption{Parameters value for the three different agents.}
    \label{tab:CMPC parameters table}
    \vspace{-3mm}
\end{table}

The simulations were performed on a 2.3 GHz Dual-Core Intel Core i5 CPU with 8GB of RAM. 
The simulation considers zonal harmonic terms up to order six and an exponential atmospheric model.
The inspectors are 6U-CubeSats with a mass of $10$ \unit{\kilo\gram} and an omnidirectional propulsion system delivering a maximum thrust of $200$ \unit{\milli\newton} (and hence 0.02 \unit{\meter\second^{-2}} in acceleration).
Similar specifications were applied in \cite{nakka_information-based_2021} giving a realistic scenario for the inspection mission.
The ISS has a mass of $419400$ \unit{\kilo\gram} and no actuation capabilities. 
The reference ISS orbit parameters are obtained from New Horizon Database on 2023-Feb-04 00h:00m:00s. 

The inspectors are deployed on three separate PRO trajectories (Fig.~\ref{fig:PRO visualization}) according to the parameters in Tab \ref{tab:CMPC parameters table}, with a non-zero initial tracking error contained inside the initial safe set.
Namely, the initial condition at time $t_0=0$ for each inspector is $\tilde{\bm{x}}_1(0)=[55.70,1.08,2.43,1.73\times10^{-2},-9.23\times10^{-2},8.00\times10^{-3}]$, $\tilde{\bm{x}}_2(0)=[ 67.72,3.27,3.88,-2.5\times10^{-3},-1.36\times10^{-1},7.01\times10^{-2}]$ and $\tilde{\bm{x}}_3(0)=[ 82.63,0.65,4.21,-1.39\times10^{-2}  ,-4.85\times10^{-2},1.61\times10^{-1}]$.
Moreover, we have $\mathbb{U}=\{\bm{u}\in\mathbb{R}^3: \|\bm{u}\|\leq 0.02\}$  and $\mathbb{I}= [0,T]$ with $T$ being the period of the assigned PRO (which is approximately 93 \unit{\minute}). The set $\tilde{\mathbb{X}}$ is chosen for each inspector $i=1,2,3$ as $\tilde{\mathbb{X}}_i=\{\tilde{\bm{x}}\in\mathbb{R}^6: \|\delta \bm{r}\|\leq k_1\bar{r}_{r,i} \land \|\delta \bm{v}\|\leq k_2\bar{v}_{r,i}\}$ with $k_1,k_2\in \mathbb{R}_{\geq0}$ being two sufficiently large constants that realistically capture the workspace.
For this mission simulation, we used $k_1=k_2=1.4$, and Table.~\ref{tab:CMPC parameters table} summarises the CMPC parameters.
We considered $Q=diag([50\bm{I}_{3\times3},59.17\bm{I}_{3\times3}])$, $R=50\bm{I}_{3\times3}$ and $V(\bm{e}_x)=2\bm{e}^T_xP\bm{e}_x$ where P is obtained by solving the discrete algebraic Riccati equation with a linearised dynamics along the reference trajectory. 
The time step was chosen as $\Delta t=0.1$\unit{s} with $N=25$ horizon steps. 
The solution to \eqref{eq:feasibility grid} took 105 \textit{min}.
The simulation results are illustrated in Fig.~\ref{fig:final barriers result} for a time interval of 3 min. 
We observe that each inspector can simultaneously satisfy \eqref{eq:position sd-hocbf}-\eqref{eq:velocity sd-hocbf} such that the safety specification in Problem \ref{main problem} is respected.
Particularly, we notice how the optimal control strategy results in an accelerate-coast-break profile: the velocity error is first increased to its maximum norm to lower the position tracking error as fast as possible, and then the inspector is left under minimum actuation until the velocity and position errors are driven to zero.
\begin{figure}
    \centering
\includegraphics[width=0.48\textwidth,trim={0cm 3.2cm 0 3.2cm},clip]{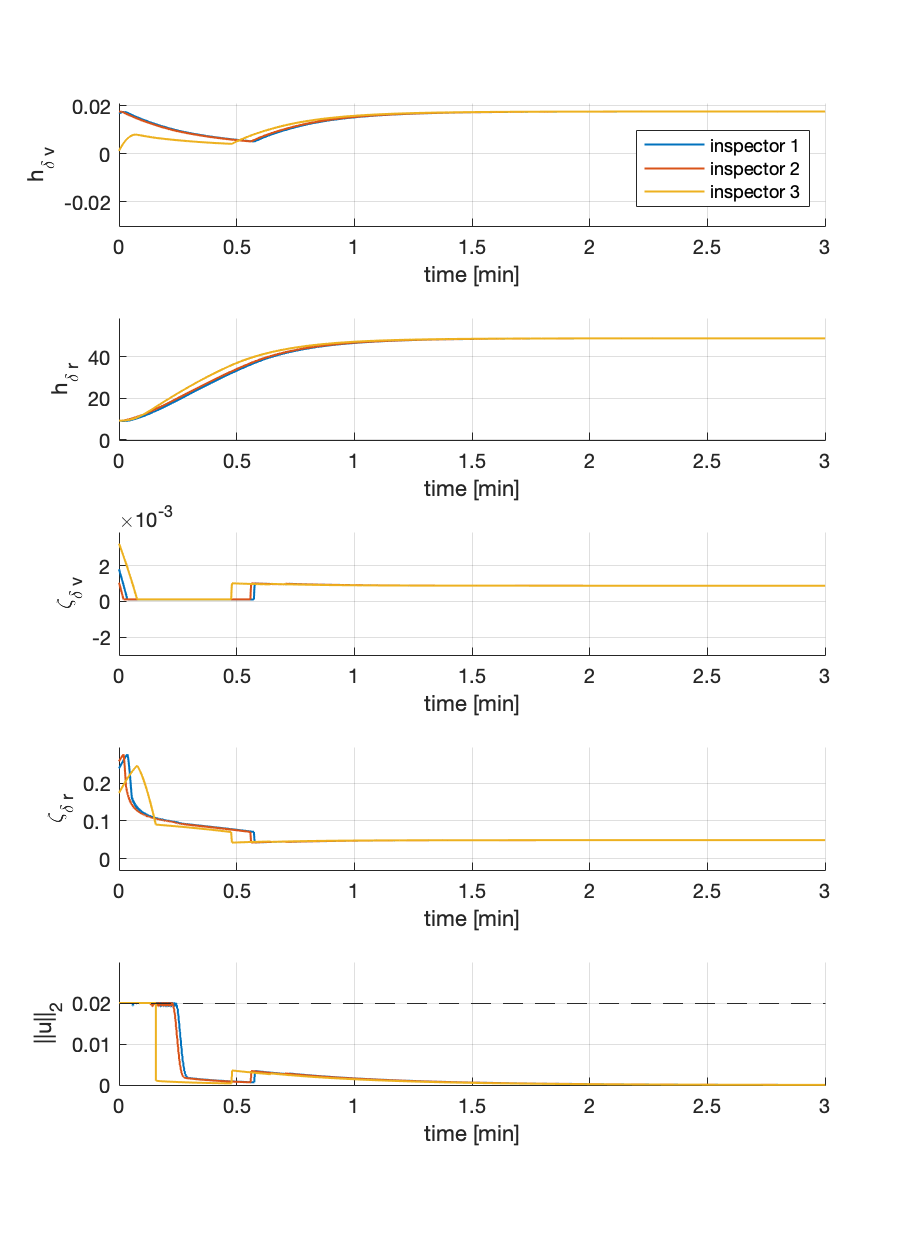}
    \caption{Time evolution of the CBFs $h_{\dvel}$,$h_{\dr}$; the HOCBF nominal function $\zeta_{\dr}$,$\zeta_{\dvel}$; and the control signal $\|\bm{u}\|_2$ for each inspector.}
    \label{fig:final barriers result}
    \vspace{-5mm}
\end{figure}

\section{CONCLUSIONS AND FUTURE WORK}
\begin{figure}
    \centering
\includegraphics[width=0.48\textwidth,trim={0cm 0.5cm 0 1cm},clip]{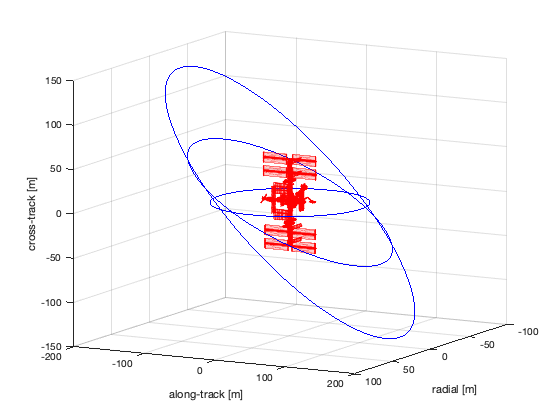}
    \caption{Three PRO assigned to each inspector around the ISS. The base directions are given in terms of the LVLH frame where $\hat{\bm{r}}$ is the \textit{radial} direction, $\hat{\bm{s}}$ is the \textit{along-track} direction and $\hat{\bm{w}}$ is the \textit{cross-track} direction.}
    \label{fig:PRO visualization}
    \vspace{-5mm}
\end{figure}
\label{conclusion}
In this work, we developed new definitions of sampled-data HOCBF thanks to which continuous-time safety guarantees can be ensured for sampled-data high-order systems. Then, we explored the applicability of the CMPC with the newly introduced definition of SD-HOCBF for a realistic space mission scenario. In future work, we will investigate how to determine suitable parameters for the CMPC scheme in a programmatic manner and we will analyse the impact of sensor noise on the control performance.

{
\bibliographystyle{IEEEtran}  
\bibliography{references}  
}

\end{document}